\newtheorem{theorem}{Theorem}
\newtheorem{corollary}[theorem]{Corollary}
\newtheorem{example}[theorem]{Example}
\newtheorem{lemma}[theorem]{Lemma}
\newtheorem{proposition}[theorem]{Proposition}
\newtheorem{remark}[theorem]{Remark}
\newenvironment{proof}[1][Proof]{\textbf{#1.} }{\ \rule{0.5em}{0.5em}}
\begin{document}
\title{\Large \textbf{
Quasilinear systems of first order PDEs\\
with nonlocal Hamiltonian structures}}
\author{\large Pierandrea Vergallo 
\\[3mm]
  \small  Department of Mathematical, Computer,\\
  \small Physical and Earth Sciences\\
  \small University of Messina, \\
  \small V.le F. Stagno D'Alcontres 31, I-98166 Messina, Italy\\
   \texttt{pierandrea.vergallo@unime.it}
}
\date{\small }
  
\maketitle

\begin{abstract}
\noindent In this paper we wonder whether a quasilinear system of PDEs of first order admits Hamiltonian formulation with local and nonlocal  operators. By using the theory of differential coverings, we find differential-geometric conditions necessary to write a given system with one of the three Hamiltonian operators investigated. 

\bigskip

\noindent MSC: 37K05, 37K10, 37K20, 37K25.
\bigskip

\noindent Keywords: integrable systems, Hamiltonian PDE, homogeneous
Hamiltonian operator, covering of PDEs, non-local operators.
\end{abstract}

\newpage 
\tableofcontents
\section{Introduction}
Hamiltonian formalism for systems of Partial Differential Equations is a consolidate tool in Integrable systems. In particular, it represents an important instrument for the study of nonlinear differential equations (e.g. \cite{DubKri,
  NovikovManakovPitaevskiiZakharov:TS}).  To find a Hamiltonian structure for a given system of PDEs is not immediate, but it is intrinsically connected to integrability.  This geometric approach to Hamiltonian formalism has been deeply investigated in the last fifty years, in particular for homogeneous operators.

\noindent The theory of homogeneous Hamiltonian operators was introduced by B. A. Dubrovin and S. P. Novikov \cite{DubrovinNovikov:PBHT, DN83}. These operators represent a natural tool when dealing with quasilinear systems of first order, especially in the homogeneous form
\begin{equation}\label{1}
u^i_t=V^i_j(u)u^j_x
\end{equation}
where $u^j=u^j(t,x)$ are field variables, $j=1,\dots , n$ depending on two independent variables $t,x$ and $V^i_j$ is a matrix whose entries are functions of $u^j$. Such systems  \eqref{1} are called \emph{hydrodynamic type systems}.

\noindent As an example, in \cite{tsarev85:_poiss_hamil} S. P. Tsarev  proved that systems of type \eqref{1} are strongly connected with first order homogeneous Hamiltonian operators
\begin{equation}\label{2}
A^{ij}=g^{ij}\partial_x+\Gamma^{ij}_ku^k_x,
\end{equation}
 also known as \emph{Dubrovin-Novikov operators}.

\noindent In the nondegenerate case  (i. e. $\det(g^{ij})\neq 0$) let us consider the following bracket associated to \eqref{2}
\begin{equation}\label{aaad}
\{F,G\}_A=\int{\frac{\delta F}{\delta u^i}A^{ij}\frac{\delta G}{\delta u^j}dx}=\int{\frac{\delta F}{\delta u^i}(g^{ij}\partial_x+\Gamma^{ij}_ku^k_x)\frac{\delta G}{\delta u^j}dx}.
\end{equation}

\noindent In \cite{DN83} B. A. Dubrovin and S. P. Novikov proved that  \eqref{aaad} is a Poisson bracket if and only if  $g^{ij}$ is a flat contravariant metric and $\Gamma^{ij}_k=-g^{is}\Gamma^j_{sk}$ are Christoffel symbols of the Levi-Civita connection of $g_{ij}$ (the inverse of $g^{ij}$).

\noindent Finally, in \cite{tsarev85:_poiss_hamil}, S. P. Tsarev shows that hydrodynamic type systems \eqref{1} admit the Hamiltonian formulation \begin{equation}
u^i_t=A^{ij}\frac{\delta H}{\delta u^j}=(g^{ij}\partial_x+\Gamma^{ij}_ku^k_x)\frac{\delta H}{\delta u^j}
\end{equation}
through a Dubrovin-Novikov operator \eqref{2} if and only if $g_{ik}V^k_j=g_{jk}V^k_i$ for every $i,j=1,\dots , n$ and $\nabla_iV^k_j=\nabla_jV^k_i$ for every $i,j,k=1,\dots ,n$.

\noindent In general, it not easy to establish this type of \emph{compatibilty conditions} between operators and systems of PDEs. A new approach to the problem of finding Hamiltonian formulations for hydrodynamic type systems through homogeneous Hamiltonian operators was firstly presented in \cite{FPV14, FerPavVit1} and \cite{PavVit1}, and finally consolidated in \cite{VerVit1} (see also \cite{VerVit2}). Necessary conditions of compatibility between hydrodynamic type systems and homogeneous operators can be found by using the theory of coverings due to  I.S. Krasil'shchik, P. Kersten and A. Verbovetsky in \cite{KerstenKrasilshchikVerbovetsky:HOpC} . First E. V. Ferapontov, M. V. Pavlov and R. Vitolo found a set of necessary conditions for third order homogeneous Hamiltonian operators \cite{FPV14} and then R. Vitolo and the author found analog conditions for homogeneous operators of second order  \cite{VerVit2}. These conditions have also projective algebraic geometry properties as proved in \cite{FerPavVit1} and the associated hydrodynamic systems possess a geometric meaning in the theory of projective line congruences, as shown by  S. I. Agafonov and E. V. Ferapontov (see for example \cite{AgaFer}).

\vspace{5mm}
\noindent This paper aims at presenting necessary conditions of compatibility for a larger class of systems of PDEs
which can be called \emph{non-homogeneous hydrodynamic type systems}
\begin{equation}\label{nhyd}
u^i_t=V^i_j(u)u^j_x+W^i(x,u).
\end{equation}
In particular, the main goal in what follows is to find conditions such that systems \eqref{nhyd}  admit a Hamiltonian formulation through nonlocal operators. First, compatibility conditions are proved for first order homogeneous Hamiltonian operators (as well as what was done for systems of type \eqref{1} in \cite{VerVit1}) and examples of such systems are presented. Then, the second operator studied is a non-local Ferapontov operator, i.e.
\begin{equation}
\tilde{A}^{ij}=g^{ij}\partial_x+\Gamma^{ij}_ku^k_x+f^i\partial_x^{-1}f^j,
\end{equation}
introduced by E. V. Ferapontov in \cite{ferapontov92:_non_hamil}.
The resulting conditions of compatibility turn out to be a useful tool when looking for isometry extensions of homogeneous quasilinear systems of PDEs. In particular, two examples of systems of PDEs in form \eqref{nhyd} are presented, the constant astigmatism equations and a bi-Hamiltonian system of dimension 3. 
\noindent The latter is a non-local Ferapontov-Mokhov operator:
\begin{equation}\label{c1}
B^{ij}=g^{ij}\partial_x+\Gamma^{ij}_ku^k_x+cu^i_x\partial_x^{-1}u^j_x,
\end{equation} 
where $c$ is a constant. These operators were studied by E. V. Ferapontov and O. I. Mokhov in \cite{MokFer12}.
An example of hydrodynamic type system possessing a Hamiltonian operator \eqref{c1} is the Chaplygin gas equation studied by Ferapontov and Mokhov and also analyzed in the present paper. 
 
 \vspace{3mm}
\noindent The paper is composed of three sections. In the first one the essential tools of tangent and cotangent differential coverings are presented for quasilinear systems of PDEs. Here, the reader can find the necessary condition for a system to admit Hamiltonian formulation: this is the starting point for the computations in what follows. The second section recalls the role of Dubrovin-Novikov operators for homogeneous systems and generalizes the results for non-homogeneous ones. The main section of the paper is the last one: the geometric conditions for quasilinear systems to admit  non-local operators are explictly found and discussed. Some examples of non-homogeneous systems of PDEs are computed with first-order homogeneous operators extended by isometries and Ferapontov-Mokhov operators. 

\vspace{5mm}
\noindent Symbolic computations arise in what follows several times, in particular when checking the compatibility conditions for the presented examples. Most of the computations in the present paper are performed by using  \texttt{Reduce} and \texttt{Maple}. For a symbolic approach to differential-geometric computations of systems and operators, the reader can use the package \texttt{CDE} available for \texttt{Reduce}. The package was implemented by R. Vitolo and represents a useful tool when dealing with Hamiltonian operators and jet spaces. In general, the use of symbolic computations for integrable systems and geometric structures has become a consolidated tool in the last years (see \cite{Vit2,Vit1}). A unified approch to this topic was presented by I. S. Krasil'shchik, A. Verbovetsky and R. Vitolo in \cite{KVV17}.

\section{Coverings and non-homogeneous quasilinear systems}
The method of cotangent covering was firstly presented in \cite{KerstenKrasilshchikVerbovetsky:HOpC} by Kersten, Krasil'shchik and Verbovetsky. Here we briefly recall how it works. Let us consider an evolutionary system
\begin{equation}\label{ev}
F^i=u^i_t-f(t,x,u^j,u^j_x,\dots )=0, \quad \quad i=1,\dots , n,
\end{equation}
where $(t,x)$ are the independent variables and $u^j$ are field variables. By definition,  the system \eqref{ev} admits a Hamiltonian formulation if there exist a differential operator $A^{ij}=a^{ij\sigma}D_\sigma$ and a functional $H=\int{h(u)\, dx}$ such that
\begin{equation}
u^i_t=A^{ij}\frac{\delta H}{\delta u^j}
\end{equation}
where $A$ is skew-adjoint and its Schouten brackets vanish ($[A,A]=0$). Hence $A$ is a \emph{Hamiltonian operator}. In this formalism, the functional $H$ is called the \emph{Hamiltonian} function of the system.

\noindent In this paper, non-homogeneous  hydrodynamic type system  are investigated:
\begin{equation}\label{sysnh}
u^i_t=V^i_ju^j_x+W^i
\end{equation}
where $V^i_j=V^i_j(\textbf{u})$ and $W^i=W^i(\textbf{u},x)$. This type of systems is form-invariant under transformations $\tilde{u}^i=\tilde{U}^i(\textbf{u})$. We will focus on the evolutionary system
\begin{equation}\label{hyd}
F^i: u^i_t-V^i_ju^j-W^i =0,\quad  \quad i=1,\dots , n,
\end{equation}
and we look for its Hamiltonian formalism.

\noindent In order to introduce the theory of differential coverings, it is necessary to recall that symmetries of \eqref{hyd} are vector functions $\varphi=\varphi^i$ such that $\ell_F(\varphi)=0$ when $F=0$, where $\ell_F$ is the Frech\'et derivative or the linearization of $\varphi$ with respect to $F$. Conservation laws are equivalence classes of 1-form $\omega=a dt+b dx$ that are closed modulo $F=0$ up to total divergencies. A conservation law is uniquely represented by generating functions $\psi_j=\delta b/\delta u^j$, which are called cosymmetries of the system. Such vector functions $\psi$ satisfy $\ell_F^*(\psi)=0$, where $\ell_F^*$ is the formal adjoint of $\ell_F$. 

\noindent For the system \eqref{hyd}, the operators are the following
\begin{align}
\ell_F(\varphi)^i&= D_t(\varphi^i)-(V^i_{j,l}u^j_x+W^i_{,l})\varphi^l-V^i_jD_x\varphi^j,\\
\ell_F^*(\psi)_i&=-D_t\psi_i+(V^k_{i,j}u^j_x-V^k_{j,i}u^j_x-W^k_{,i})\psi_k+V^k_iD_x\psi_k.
\end{align}

\noindent Here we used the notation $P_{,i}$ to indicate the partial derivative of $P$ with respect to $u^j$ and $P_{,x}$ the partial derivative with respect to the independet variable $x$.\\
\noindent  It is proved (\cite{Olver:ApLGDEq,KVV17}) that if $A$ is a Hamiltonian operator for $F=0$, then \begin{equation}\label{bivec}
\ell_F\circ A= A^*\circ \ell_F^*.
\end{equation}
It follows that if $\psi_k$ is a cosymmetry then $\ell_F(A^{ij}\psi_j)=0$, that is $\varphi^i=A^{ij}\psi_j$ is a symmetry of the system \eqref{hyd}. This implies that Hamiltonian operators map conserved quantities into symmetries. 

\noindent The approach introduced by Kersten, Krasil'shchik and Verbovetsky in \cite{KerstenKrasilshchikVerbovetsky:HOpC}  is the following. Let us introduce new variables $p_i$, such that we can associate $D_x\psi_i$ to $p_{i,x}$, $D^2_{x}\psi_i$ to $p_{i,xx}$ and so on. This correspondence allows us to associate to each differential operator $A^{ij}=a^{ij\sigma}D_\sigma\psi_j$ a linear vector function $A^{i}=a^{ij\sigma}p_{j,\sigma}$ where $\sigma$ identifies the derivation order with respect to $x$.

\noindent Now, let us introduce the \emph{cotangent covering} 
\begin{equation}
\mathcal{T}^*:\begin{cases}F=0\\\ell_F^*(\textbf{p})=0\end{cases}.
\end{equation}

\noindent For non-homogeneous hydrodynamic type systems it can be explicitly written as
\begin{equation}\label{a2}
\mathcal{T}^*:\begin{cases}u^i_t=V^i_ju^j_x+W^i\\p_{i,t}=(V^k_{i,j}u^j_x-V^k_{j,i}u^j_x-W^k_{,i})p_k+V^k_ip_{k,x}\end{cases}
\end{equation}

\noindent Let us introduce variables $q^i$ such that  we correspond $q^i$ for vector functions $\varphi^i$, $q^i_x$ to $D_x\varphi^i$ and so on. Analogously, we introduce the \emph{tangent covering}
\begin{equation}
\mathcal{T}:\begin{cases}F=0\\\ell_F(\textbf{q})=0\end{cases},
\end{equation} 
and for non-homogeneous hydrodynamic type systems \eqref{hyd}
\begin{equation}\label{a1}
\mathcal{T}:\begin{cases}u^i_t=V^i_ju^j_x+W^i\\
q^i_t=(V^i_{j,l}u^j_x+W^i_{,l})q^l+V^i_jq^j_x\end{cases}.
\end{equation}

\noindent Cotangent and tangent coverings are form-invariant under trasformations of type $\tilde{u}^i=\tilde{U}^i(\textbf{u})$.

\noindent We stress that the condition \eqref{bivec} can be considered as a necessary condition for a system $F=0$ to admit a Hamiltonian formulation through the Hamiltonian operator $A$. In general, by using the new setting of variables $\textbf{p}$ the following result holds true:

\begin{theorem}[\cite{KerstenKrasilshchikVerbovetsky:HOpC}]A linear vector function $A$ in total derivatives satisfies \eqref{bivec} if and only if the equation
\begin{equation}\label{nes}
\ell_F(A(\textbf{p}))=0
\end{equation}holds on the cotangent covering \eqref{a2}.\end{theorem}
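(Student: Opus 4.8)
The plan is to read \eqref{bivec} as an identity between operators in total derivatives and to evaluate both sides on the tautological covering variable $\mathbf{p}$. The one structural fact I would use throughout is that, by the very construction of $\mathcal{T}^*$ in \eqref{a2}, $\mathbf{p}$ is a \emph{universal cosymmetry}: the covering is cut out by $\ell_F^{*}(\mathbf{p})=0$, so this relation together with all its total derivatives may be set to zero whenever an expression is reduced modulo $\mathcal{T}^*$.

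For the direct implication I would apply \eqref{bivec} to $\mathbf{p}$, obtaining $\ell_F(A(\mathbf{p}))=A^{*}(\ell_F^{*}(\mathbf{p}))$. Because $\mathcal{T}^*$ is invariant under $D_x$ and $D_t$, the constraint $\ell_F^{*}(\mathbf{p})=0$ holds together with all its differential consequences; hence the right-hand side is $A^{*}$ applied to a quantity that vanishes identically on $\mathcal{T}^*$, and \eqref{nes} follows immediately.

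For the converse I would set $B:=\ell_F\circ A-A^{*}\circ\ell_F^{*}$, an operator whose coefficients depend only on the $u$-jets. Then $B(\mathbf{p})=\ell_F(A(\mathbf{p}))-A^{*}(\ell_F^{*}(\mathbf{p}))$, and on $\mathcal{T}^*$ the second term disappears, so \eqref{nes} says precisely that $B(\mathbf{p})$ vanishes modulo $\mathcal{T}^*$; the goal is to upgrade this to $B=0$, which is exactly \eqref{bivec}. Grouping the terms of $B$ in which a $D_t$ differentiates the argument, I would check that this top part equals $(A+A^{*})\circ D_t$. Since the Hamiltonian operators considered here are skew-adjoint, $A+A^{*}=0$ and this contribution drops out, leaving $B$ a purely spatial operator $a^{ij\rho}(u,u_x,\dots)D_x^{\rho}$. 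Its value on $\mathbf{p}$ is then $a^{ij\rho}p_{j,\rho}$, a linear combination of the independent fibre coordinates $p_j,\,p_{j,x},\dots$, so $B(\mathbf{p})\equiv 0$ on $\mathcal{T}^*$ forces every $a^{ij\rho}$ to vanish, i.e.\ $B=0$.

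The hard part will be the bookkeeping of the $t$-derivatives of $\mathbf{p}$ in the converse. A priori $\ell_F(A(\mathbf{p}))$ contains $D_t\mathbf{p}$, coming from the $D_t(\varphi)$ term of $\ell_F$, and reducing modulo $\mathcal{T}^*$ amounts to substituting $p_{i,t}$ by the right-hand side of the second line of \eqref{a2}; I must ensure that this substitution does not quietly absorb a nonzero operator into the covering relations. This is exactly where skew-adjointness does the work: it guarantees that no genuine $D_t\mathbf{p}$ survives in $B$, so the residual spatial operator can be read off unambiguously from the coefficients of the independent jets $p_{j,\rho}$, and the reduction modulo $\mathcal{T}^*$ is faithful.
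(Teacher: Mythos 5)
The paper offers no proof of this statement---it is quoted verbatim from Kersten--Krasil'shchik--Verbovetsky---so there is no in-paper argument to compare your route against; I can only assess your proposal on its own terms. Your direct implication is fine, and your identification of the $D_t$-part of $B=\ell_F\circ A-A^{*}\circ\ell_F^{*}$ as $(A+A^{*})\circ D_t$ is correct: since $A=a^{ij\sigma}D_x^{\sigma}$ involves only $x$-derivatives, the only $D_t$ that can reach the argument comes from the leading $D_t$ of $\ell_F$ and the leading $-D_t$ of $\ell_F^{*}$. Your final step (that a purely spatial operator vanishing on the independent fibre coordinates $p_j,p_{j,x},\dots$ of $\mathcal{T}^*$ must have zero coefficients) is also sound.

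The gap is in how you dispose of that top part. You discard $(A+A^{*})\circ D_t$ by asserting that ``the Hamiltonian operators considered here are skew-adjoint'', but skew-adjointness is not a hypothesis of the theorem: it quantifies over arbitrary linear vector functions $A$ in total derivatives, and the paper stresses immediately after the statement that condition \eqref{nes} does \emph{not} guarantee skew-symmetry. Without $A+A^{*}=0$ the reduction modulo $\mathcal{T}^*$ is not faithful in the way you need: substituting $p_{j,t}\mapsto(V^k_{j,l}u^l_x-V^k_{l,j}u^l_x-W^k_{,j})p_k+V^k_jp_{k,x}$ folds $(A+A^{*})\circ D_t$ into the spatial part, so \eqref{nes} yields only the single combined identity $(A+A^{*})\circ\Lambda+S=0$, where $\Lambda$ is the spatial operator expressing $p_{j,t}$ on the covering and $S$ is the genuinely spatial part of $B$; it does not give $A+A^{*}=0$ and $S=0$ separately. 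Concretely, for $u_t=0$ and $A=a(u)$ (a multiplication operator, hence self-adjoint) one has $\ell_F(A(\mathbf{p}))=a_{,u}u_t\,p+a\,p_t=0$ on $\mathcal{T}^*$ for every $a$, while $\ell_F\circ A=a\,D_t$ and $A^{*}\circ\ell_F^{*}=-a\,D_t$ agree only for $a=0$. So the converse (and indeed the literal statement) requires either adding the hypothesis $A^{*}=-A$---the reading under which your argument is essentially complete---or weakening \eqref{bivec} to $\ell_F\circ A=C\circ\ell_F^{*}$ for some operator $C$. You should state explicitly which reading you adopt and, in the first case, record skew-adjointness as a standing assumption rather than as a property of ``the operators considered here''.
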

The previous result is invariant in form under transformations $\tilde{u}^i=\tilde{U}^i(\textbf{u})$ if $A$ is also invariant. In particular, we focus on operators satisfying this condition.

 \noindent As stressed in \cite{VerVit1}, the condition \eqref{nes} does not introduce the Hamiltonian function $H=\int{h(u)\, dx}$ and represents an easier method to compute explicit \emph{necessary conditions} of compatibilty between a system \eqref{hyd} and an operator $A$. We emphasise that the previous condition does not guarantee that an operator $A$ is skew-symmetry or its Schouten brackets annihilate, then it does not ensure the hamiltonianity of the operator.

\section{Local operators}

Firstly, we investigate local structures for homogeneous and non-homogeneous hydrodynamic type systems. In order to do this, let us consider the first order homogeneous operators
\begin{equation}\label{dubnov}
A^{ij}=g^{ij}\partial_x+\Gamma^{ij}_ku^k_x
\end{equation}
where the tensor $g^{ij}$ is taken to be non-degenerate. Then,
\begin{theorem}[\cite{DN83}]A first order homogeneous operator is Hamiltonian if and only if 
\begin{enumerate}
\item if $g_{ij}=(g^{ij})^{-1}$, then $g_{ij}$ is a flat metric;
\item $\Gamma^{ij}_k=-g^{is}\Gamma^j_{sk}$, where $\Gamma^j_{sk}$ are Christoffel symbols for the metric $g_{ij}$.
\end{enumerate}
\end{theorem}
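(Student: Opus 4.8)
The plan is to use the definition of Hamiltonian operator recalled above: $A$ must be \emph{skew-adjoint} and have vanishing Schouten bracket $[A,A]=0$, the latter being equivalent to the Jacobi identity for the bracket $\{F,G\}_A=\int \frac{\delta F}{\delta u^i}A^{ij}\frac{\delta G}{\delta u^j}\,dx$. I would treat these two requirements separately, since each produces a distinct geometric condition, and then assemble them into items 1 and 2.

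First I would extract the skew-adjointness conditions. Writing $\xi_i=\delta F/\delta u^i$, $\eta_j=\delta G/\delta u^j$ and integrating by parts in $\int \xi_i(g^{ij}\partial_x+\Gamma^{ij}_k u^k_x)\eta_j\,dx$, one moves $\partial_x$ off $\eta_j$, uses $\partial_x g^{ij}=g^{ij}_{,k}u^k_x$, and relabels dummy indices. Matching the terms carrying a derivative of $\eta$ forces $g^{ij}=g^{ji}$, so that the nondegenerate $g^{ij}$ genuinely defines a contravariant metric with inverse $g_{ij}$; matching the remaining terms forces
\begin{equation}\label{skcond}
g^{ij}_{,k}=\Gamma^{ij}_k+\Gamma^{ji}_k.
\end{equation}
Substituting the relation $\Gamma^{ij}_k=-g^{is}\Gamma^j_{sk}$ of item 2 into \eqref{skcond}, I would recognize it as the metric-compatibility condition $\nabla_k g^{ij}=0$ for the connection with coefficients $\Gamma^j_{sk}$. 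Thus skew-adjointness is equivalent to $g^{ij}$ being symmetric together with $\Gamma^j_{sk}$ defining a metric-compatible connection.

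Next I would impose the Jacobi identity by expanding the cyclic sum $\{\{F,G\},H\}_A+\text{cyclic}$ (equivalently, by computing the Schouten bracket $[A,A]$) and collecting the result according to the order of the $x$-derivatives of the covector arguments. The highest-order terms do not cancel automatically: they yield the symmetry condition $g^{is}\Gamma^{jk}_s=g^{js}\Gamma^{ik}_s$, which after lowering indices with $g$ is exactly the torsion-free condition $\Gamma^j_{sk}=\Gamma^j_{ks}$. Combined with the metric compatibility obtained above, this pins $\Gamma^j_{sk}$ down as the Christoffel symbols of the Levi-Civita connection of $g_{ij}$, which is item 2. The remaining lower-order terms, once the compatibility and torsion-free relations are used to simplify them, reorganize themselves into the components of the Riemann curvature tensor of $g$, so that their vanishing is precisely the flatness statement $R^i_{jkl}=0$, i.e. item 1.

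The main obstacle is this last curvature computation: the Schouten bracket of a first-order operator generates a large number of terms involving $g^{ij}$, $\Gamma^{ij}_k$, their $u^k$-derivatives, and quadratic products thereof, and the real difficulty lies in grouping them carefully and recognizing the combination of the form $\partial\Gamma-\partial\Gamma+\Gamma\Gamma-\Gamma\Gamma$ as the curvature. For the converse I would run the argument backwards: if $g$ is a flat metric and $\Gamma^j_{sk}$ are its Levi-Civita symbols, then metric compatibility restores skew-adjointness via \eqref{skcond}, while torsion-freeness together with $R^i_{jkl}=0$ makes every term in the Jacobi expansion cancel, so that $A$ is Hamiltonian.
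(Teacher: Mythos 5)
The paper does not actually prove this theorem: it is quoted from Dubrovin--Novikov \cite{DN83}, and the text only records the resulting coordinate conditions $g^{ij}_{,k}=\Gamma^{ij}_k+\Gamma^{ji}_k$ and $R^{ij}_{lk}=0$ immediately afterwards. Your proposal is the standard (and correct) argument behind that citation: skew-adjointness yields the symmetry of $g^{ij}$ and $g^{ij}_{,k}=\Gamma^{ij}_k+\Gamma^{ji}_k$ (i.e.\ metric compatibility once $\Gamma^{ij}_k=-g^{is}\Gamma^j_{sk}$ is imposed), the leading terms of the Jacobi identity give $g^{is}\Gamma^{jk}_s=g^{js}\Gamma^{ik}_s$, hence torsion-freeness and the Levi-Civita identification, and the residual terms assemble into the curvature tensor whose vanishing is flatness; this matches exactly the conditions the paper lists, so there is nothing to object to beyond the fact that the curvature computation is sketched rather than carried out.
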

We obtain that the following hamiltonianity conditions must be satisfied:
\begin{align}
g^{ij}_{,k}&=\Gamma^{ij}_k+\Gamma^{ji}_k\\
R^{ij}_{lk}&=\Gamma^{ij}_{l,k}-\Gamma^{ij}_{k,l}+\Gamma^i_{ks}\Gamma^{sj}_l-\Gamma^j_{ks}\Gamma^{si}_l=0
\end{align}
where the last one is flatness of the metric given in coordinates by the Riemann curvature tensor.

\noindent We briefly recall that S. P. Tsarev found a set of necessary and sufficient conditions for a hydrodynamic type system to admit Dubrovin-Novikov operators:
\begin{lemma}A homogeneous system of first order PDEs 
\begin{equation}
u^i_t=V^i_j(u)u^j_x
\end{equation} admits a Dubrovin-Novikov operator \eqref{dubnov} if and only if there exists a flat non-degenerate metric $g_{ij}$ such that
\begin{itemize}
\item[(i)] $g_{is}V^s_j=g_{js}V^s_i$, 
\item[(ii)] $\nabla_iV^j_k=\nabla_jV^i_k$.
\end{itemize}
where $\nabla$ is the covariant derivative with respect to the metric $g_{ij}$.\end{lemma}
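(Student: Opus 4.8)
The plan is to route both implications through the single tensorial identity $V_{lk}=\nabla_l\nabla_k h$, where $V_{lk}=g_{li}V^i_k$ is the lowered velocity matrix and $h$ is the density of the Hamiltonian $H=\int h(u)\,dx$. Once this identity is established, condition (i) becomes the symmetry of a covariant Hessian and condition (ii) becomes a flatness statement, so the geometric meaning of the two hypotheses is transparent.

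For the direct implication I would start from the assumed formulation $u^i_t=A^{ij}\,\delta H/\delta u^j$ with $A$ of the form \eqref{dubnov}. Since the density depends on $\textbf{u}$ only, the variational derivative is just $\delta H/\delta u^j=h_j$ with $h_j=\partial h/\partial u^j$, and expanding
\[
A^{ij}\frac{\delta H}{\delta u^j}=\bigl(g^{ij}h_{jk}+\Gamma^{ij}_k h_j\bigr)u^k_x,\qquad h_{jk}=\frac{\partial^2 h}{\partial u^j\partial u^k},
\]
identifies $V^i_k=g^{ij}h_{jk}+\Gamma^{ij}_k h_j$. Lowering the index with $g_{li}$ and using the Levi-Civita relation $\Gamma^{ij}_k=-g^{is}\Gamma^j_{sk}$ from the Dubrovin--Novikov theorem collapses this to $V_{lk}=h_{lk}-\Gamma^j_{lk}h_j=\nabla_l\nabla_k h$. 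Symmetry of the covariant Hessian gives (i) at once, while (ii) follows by differentiating once more: since the operator is Hamiltonian the metric is flat, third covariant derivatives of a scalar commute, and $\nabla_i\nabla_l\nabla_k h$ is totally symmetric, which after raising one index is exactly the symmetry of $\nabla V$ asserted in (ii).

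For the converse I would first build the operator: given a flat $g$ satisfying (i)--(ii), the Levi-Civita connection supplies $\Gamma^{ij}_k=-g^{is}\Gamma^j_{sk}$, and by the Dubrovin--Novikov theorem the resulting operator \eqref{dubnov} is automatically Hamiltonian. The real content of this direction is to produce the density $h$. Here I would pass to flat coordinates, in which $\Gamma=0$ and $g$ is constant; condition (i) then says $V_{lk}$ is symmetric, and (i) together with (ii) forces $\partial_m V_{lk}$ to be totally symmetric in $m,l,k$. By the Poincar\'e lemma this integrability yields a local potential $h$ with $h_{lk}=V_{lk}$, and unwinding the flat-coordinate computation of $A^{ij}\delta H/\delta u^j$ with $H=\int h\,dx$ reconstructs the given system.

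I expect the main obstacle to sit in the sufficiency direction rather than in the extraction of the conditions. Indeed, the necessary conditions alone can be obtained mechanically inside the covering formalism of the previous section: substituting $A^i=g^{ij}p_{j,x}+\Gamma^{ij}_k u^k_x p_j$ into $\ell_F(A(\textbf{p}))=0$ on the cotangent covering \eqref{a2} with $W=0$, the coefficient of $p_{k,xx}$ gives $g^{ij}V^k_j=g^{jk}V^i_j$, i.e. (i), and the lower-order coefficients, once (i) and the Levi-Civita property are imposed, reduce to (ii). This is essentially bookkeeping of independent jet monomials. What genuinely requires argument is that the covering criterion certifies only the bivector identity \eqref{bivec} and, as stressed in the text, never manufactures $H$: the delicate step is showing that (i)--(ii) upgrade the totally symmetric tensor $\partial_m V_{lk}$ to an actual potential and that this potential reproduces the system, which is where the flatness hypothesis is indispensable.
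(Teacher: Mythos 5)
Your proposal is correct, but it is worth noting that the paper does not actually prove this Lemma: it is recalled verbatim from Tsarev's work (\cite{tsarev85:_poiss_hamil}), so there is no in-paper argument to match yours against. What you have written is essentially Tsarev's classical proof, routed through the identity $V_{lk}=\nabla_l\nabla_k h$: the necessity direction correctly uses $\Gamma^{ij}_k=-g^{is}\Gamma^j_{sk}$ to collapse $V^i_k=g^{ij}h_{jk}+\Gamma^{ij}_kh_j$ to a covariant Hessian (whose symmetry needs only torsion-freeness, while the total symmetry of the third derivative does require flatness, as you say), and the sufficiency direction correctly integrates the totally symmetric tensor $\partial_mV_{lk}$ twice in flat coordinates via the Poincar\'e lemma. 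This is a genuinely different route from the covering machinery the paper uses for its other compatibility results: as you rightly observe, and as the paper itself stresses after Theorem 1, the condition $\ell_F(A(\mathbf{p}))=0$ only certifies the bivector identity \eqref{bivec} and never produces the density $h$, so it can deliver necessity of (i)--(ii) but not the converse. Your potential-construction step is exactly the extra content that upgrades the covering-style necessary conditions to the "if and only if" of the Lemma; the only small caveat is that you should state explicitly that the Hamiltonian density is assumed of hydrodynamic type, $h=h(\mathbf{u})$ (this is forced if the right-hand side is to remain first-order homogeneous, but it is the hypothesis under which $\delta H/\delta u^j=h_j$ and the whole identification proceeds), and that the resulting $h$ is only local.
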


\noindent The aim of this section is to find similar conditions in the non-homogeneous case, i.e. when $W^i(x,u)$ is non-zero. In order to do this we will use the approach presented in the previous section. The conditions we will obtain are valid for more general systems, although they are only necessay and not sufficient.

\noindent Let us associate to  \eqref{dubnov} the linear vector function
\begin{equation}\label{eq:22}
A^i(\mathbf{p})=g^{ij}p_{j,x}+\Gamma^{ij}_ku^k_xp_j
\end{equation}

\noindent Therefore, using \eqref{a1} and \eqref{a2}, the linearization of the operator $A$ is computed:\footnote{The derivative $\partial_x$ is a total derivative as well as $\partial_t$.Whereas $W_x$ is the partial derivative with respect to the independent variable $x$.}
\begin{align}
\ell_F(A^i)&=\partial_tA^i-(V^i_{j,l}u^j_x+W^i_{,l})A^l-V^i_j\partial_xA^j\\
\begin{split}&=g^{ij}_lu^l_tp_{j,x}+g^{ij}p_{j,xt}+\Gamma^{ij}_{k,l}u^l_tu^k_xp_j+\Gamma^{ij}_ku^k_{xt}p_j+\Gamma^{ij}_ku^k_xp_{j,t}+\\
&\hphantom{ciao}-g^{lk}V^i_{j,l}u^j_xp_{k,x}-\Gamma^{lh}_kV^i_{j,l}u^j_xu^k_xp_h-W^i_{,l}g^{lk}p_{k,x}-W^i_{,l}\Gamma^{lh}_ku^k_xp_h\\
&\hphantom{ciao}-V^i_j(g^{jl}_ku^k_xp_{l,x}+g^{jl}p_{l,xx}+\Gamma^{jl}_{k,l}u^l_xu^k_xp_l+\Gamma^{jl}_{k}u^k_{xx}p_l+\Gamma^{jl}_ku^k_xp_{l,x})\\
&=g^{ij}_lV^l_ku^k_xp_{j,x}+g^{ij}_lW^lp_{j,x}+\\
&\hphantom{ciao}+g^{ij}V^k_{j,lh}u^h_xu^l_xp_k+g^{ij}V^k_{j,l}u^l_{xx}p_k+g^{ij}V^k_{j,l}u^l_xp_{k,x}+\\&\hphantom{ciao}\hphantom{ciao}-g^{ij}V^k_{l,jh}u^h_xu^l_xp_k-g^{ij}V^k_{l,j}u^l_{xx}p_k-g^{ij}V^k_{l,j}u^l_xp_{k,x}+\\
&\hphantom{ciao}-g^{ij}W^k_{,jx}p_k-g^{ij}W^k_{,jh}u^h_xp_k-g^{ij}W^k_{,j}p_{k,x}+\\
&\hphantom{ciao}+g^{ij}V^k_{j,l}u^l_{xx}p_{k,x}+g^{ij}V^k_{j}p_{k,xx}+\\
&\hphantom{ciao}+\Gamma^{ij}_{k,l}V^l_hu^h_xu^k_xp_j+\Gamma^{ij}_{k,l}W^lu^k_xp_j+\\
&\hphantom{ciao}+\Gamma^{ij}_kV^k_{l,h}u^h_xu^l_xp_j+\Gamma^{ij}_kV^k_lu^l_{xx}p_j+\Gamma^{ij}_kW^k_{,x}p_j+\Gamma^{ij}_kW^k_{,l}u^l_xp_j+\\
&\hphantom{ciao}+\Gamma^{ij}V^l_{j,h}u^k_xu^h_xp_l-\Gamma^{ij}_kV^l_{h,j}u^h_xu^k_xp_j-\Gamma^{ij}_kW^l_{,j}u^k_xp_l+\Gamma^{ij}_kV^l_ju^k_xp_{l,x}+\\
&\hphantom{ciao}-g^{lk}V^i_{j,l}u^j_xp_{k,x}-\Gamma^{lh}_kV^i_{j,l}u^j_xu^k_xp_h-W^i_{,l}g^{lk}p_{k,x}-W^i_{,l}\Gamma^{lh}_ku^k_xp_h\\
&\hphantom{ciao}-V^i_jg^{jl}_ku^k_xp_{l,x}-V^i_jg^{jl}p_{l,xx}+\\
&\hphantom{ciaociao}-V^i_j\Gamma^{jl}_{k,h}u^k_xu^h_xp_l-V^i_j\Gamma^{jl}_ku^k_{xx}p_l-V^i_j\Gamma^{jl}_ku^k_xp_{l,x}\end{split}
\end{align}

\noindent Let us now compute $\ell_F(A)=0$, by collecting for each undetermined variable 
\begin{equation*}\label{var}p_{l,xx},\quad u^l_{xx}p_k,\quad u^k_xp_{j,x},\quad p_{j,x},\quad u^k_xp_j,\quad  p_j,\end{equation*} then the following conditions must be satisfied:
\begin{subequations}\begin{align}
&\label{11a}-V^i_lg^{jl}+V^j_lg^{il}=0\\
&\label{13a}g^{ik}(V^j_{k,h}-V^j_{h,k})+\Gamma^{ij}_kV^k_h-\Gamma^{kj}_hV^i_k=0\\
&\label{12a} g^{ij}_lV^l_k+g^{il}(V^j_{l,k}-V^j_{k,l})+\Gamma^{il}_kV^j_l-g^{lj}V^i_{k,l}-g^{lj}_kV^i_l-\Gamma^{lj}_kV^i_l=0\\
&\label{14a}g^{ij}_lW^l-g^{il}W^j_{,l}-g^{lj}W^i_{,l}=0\\
&\label{15a}-g^{il}W^j_{,lk}+\Gamma^{ij}_{k,l}W^l+\Gamma^{ij}_lW^l_{,k}-\Gamma^{il}_kW^j_{,l}-\Gamma^{lj}_kW^i_{,l}=0\\
&\label{16a}-g^{il}W^j_{,xl}+\Gamma^{ij}_kW^j_x=0,
\end{align}\end{subequations}
plus another one which was proved  to be a differential consequence of the previous ones \cite{VerVit1}.

\begin{lemma}Condition \eqref{15a} is equivalent to $\nabla_k\nabla^iW^j=0$.\end{lemma}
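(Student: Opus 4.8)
My plan is to read the left-hand side of \eqref{15a} as (minus) the coordinate expression of the second covariant derivative $\nabla_k\nabla^iW^j$ of the contravariant field $W$ with respect to the Levi--Civita connection of the flat metric $g_{ij}$, so that the asserted equivalence becomes the vanishing of a tensor. I would first record the dictionary already at hand: by the Dubrovin--Novikov Theorem $\Gamma^{ij}_k=-g^{is}\Gamma^j_{sk}$, together with the metric-compatibility identity $g^{ij}_{,k}=\Gamma^{ij}_k+\Gamma^{ji}_k$ and flatness $R^{ij}_{lk}=0$. Using the first two relations I rewrite the first covariant derivative in operator symbols, $\nabla^iW^j=g^{il}\nabla_lW^j=g^{il}W^j_{,l}-\Gamma^{ij}_lW^l$, and then, since $\nabla g=0$, I compute $\nabla_k\nabla^iW^j=g^{il}\nabla_k\nabla_lW^j$ from the standard formula for the second covariant derivative of a vector, converting every Christoffel symbol back into the $\Gamma^{ij}_k$.

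Collecting the outcome by differential type, three of the five terms of \eqref{15a} should drop out at once: the second-order term $g^{il}W^j_{,lk}$ is immediate, the coefficient of $W^j_{,l}$ reduces to $\Gamma^{il}_k$ after substituting $g^{il}_{,k}=\Gamma^{il}_k+\Gamma^{li}_k$, and the term $-\Gamma^{ij}_lW^l_{,k}$ appears verbatim. This part is purely mechanical and uses only the two relations above.

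The hard part is reconciling the remaining lower-order terms, and this is where the geometry is genuinely used. The computation produces an antisymmetrised derivative $\Gamma^{ij}_{k,l}-\Gamma^{ij}_{l,k}$ that must be traded for quadratic Christoffel products through the vanishing of the curvature $R^{ij}_{lk}$; metric-compatibility then cancels the surviving products and leaves exactly the single term $\Gamma^{ij}_{k,l}W^l$, while the one asymmetric first-order contribution $\Gamma^{lj}_kW^i_{,l}$ is absorbed using the companion Killing-type relation \eqref{14a}. Because of these cancellations the quickest rigorous route is to pass to flat coordinates, where $g^{ij}$ is constant and all $\Gamma^{ij}_k$ vanish: there \eqref{15a} collapses to $g^{il}W^j_{,lk}=0$, i.e. $W^j_{,lk}=0$, whereas $\nabla_k\nabla^iW^j=g^{il}W^j_{,lk}$, so the two conditions coincide, and the tensorial character of $\nabla_k\nabla^iW^j$ transports the equivalence back to arbitrary coordinates.
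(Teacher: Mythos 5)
Your first three paragraphs set up exactly the computation the paper performs: expand $\nabla_k\nabla^i W^j$ via the formula for the second covariant derivative, convert the Levi--Civita symbols into the contravariant symbols through $\Gamma^{ij}_k=-g^{is}\Gamma^j_{sk}$ and $g^{ij}_{,k}=\Gamma^{ij}_k+\Gamma^{ji}_k$, trade the antisymmetrized derivative $\Gamma^{ij}_{l,k}-\Gamma^{ij}_{k,l}$ for quadratic terms via $R^{ij}_{lk}=0$, and absorb the leftover asymmetric term using \eqref{14a}. Had you carried this out, you would have reproduced the paper's proof. The problem is that you then discard it and rest the rigor of the argument on the flat-coordinate shortcut, and that shortcut does not work.

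The gap is that the left-hand side of \eqref{15a} is \emph{not} a tensor, so agreement with $-\nabla_k\nabla^iW^j$ in flat coordinates cannot be ``transported back'' by the tensoriality of $\nabla_k\nabla^iW^j$ alone; that tells you how one side transforms, not the other. Concretely, completing the general-coordinate computation gives
\begin{equation*}
\nabla_k\nabla^iW^j \;=\; -\bigl(\text{LHS of \eqref{15a}}\bigr)\;+\;\Gamma^{j}_{kl}\left(\nabla^iW^l+\nabla^lW^i\right),
\end{equation*}
so the two expressions differ by a term carrying a bare Christoffel symbol, and the identity holds only modulo condition \eqref{14a} --- exactly the role your third paragraph assigns to it. In flat coordinates this discrepancy is invisible because $\Gamma^j_{kl}=0$, which is why your check there succeeds without ever invoking \eqref{14a}; that is a clear signal that the flat-coordinate verification has lost information. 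There is also a circularity: conditions \eqref{11a}--\eqref{16a} were obtained by collecting coefficients of the monomials $p_{j,x}$, $u^k_xp_j$, $p_j,\dots$ in a fixed chart, and under $\tilde u^i=\tilde U^i(\mathbf{u})$ these monomials mix (for instance $p_{j,x}$ produces $u^m_x\tilde p_l$ contributions through the second derivatives of $\tilde U$), so the individual condition \eqref{15a} is not a priori covariant --- establishing that it is equivalent to a tensorial equation is precisely the content of the Lemma, and cannot be assumed in order to justify passing to flat coordinates. The remedy is simply to finish the direct computation you outlined.
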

\begin{proof}
 Consider that 
\begin{equation*}
\nabla^iW^j=g^{is}\nabla_sW^k=g^{is}W^k_{,s}-\Gamma^{ij}_sW^s
\end{equation*}
then, by using $g^{ij}_k=\Gamma^{ij}_k+\Gamma^{ji}_k$:
\begin{align}\begin{split}\nabla_k\nabla^iW^j&=(\nabla^iW^j)_k+\Gamma^{i}_{kl}\nabla^lW^j+\Gamma^j_{kl}\nabla^iW^l\\
&=\left(g^{il}W^j_{,l}-\Gamma^{ij}_lW^l\right)_k+\Gamma^i_{kl}\left(g^{ls}W^j_{,s}-\Gamma^{lj}_sW^s\right)\\
&\hphantom{ciao}+\Gamma^{j}_{kl}\left(g^{is}W^l_{,s}-\Gamma^{il}_sW^s\right)\\
&=g^{il}_kW^j_{,l}+g^{ij}W^k_{lk}-\Gamma^{ij}_lW^l_{,k}-\Gamma^{ij}_{l,k}W^l\\
&\hphantom{ciaociao}g^{ls}\Gamma^{i}_{kl}W^j_{,s}-\Gamma^i_{kl}\Gamma^{lj}_sW^s+\Gamma^j_{kl}g^{is}W^l_{,s}-\Gamma^j_{kl}\Gamma^{il}_sW^s\end{split}\\
\begin{split}\label{eq23}&=\Gamma^{il}_kW^j_{,l}+\Gamma^{li}_kW^j_{,l}+g^{il}W^j_{,lk}-\Gamma^{ij}_lW^l_{,k}-\Gamma^{ij}_{l,k}W^l-\Gamma^{si}_kW^j_{,s}\\
&\hphantom{ciaociao}-\Gamma^i_{kl}\Gamma^{lj}_sW^s+\Gamma^j_{kl}g^{is}W^l_{,s}-\Gamma^j_{kl}\Gamma^{il}_sW^s\end{split}
\end{align}
Note that two terms cancel out and the last one in \eqref{eq23} can be substituted by $-\Gamma^{j}_{kl}g^{il}_sW^s+\Gamma^j_{kl}\Gamma^{li}_sW^s$, therefore
\begin{align*}
\begin{split}\nabla_k\nabla^iW^j&=\Gamma^{il}_kW^j_{,l}+g^{il}W^j_{,lk}-\Gamma^{ij}_lW^l_{,k}-\Gamma^{ij}_{l,k}W^l\\
&\hphantom{ciao}-\Gamma^i_{kl}\Gamma^{lj}_sW^s+\Gamma^j_{kl}g^{is}W^l_{,s}-\Gamma^{j}_{kl}g^{il}_sW^s+\Gamma^j_{kl}\Gamma^{li}_sW^s.\end{split}
\end{align*}

\noindent Moreover, from the Hamiltonianity condition on \eqref{dubnov}, it follows that 
\begin{equation*}
0=R^{ij}_{lk}=\Gamma^{ij}_{l,k}-\Gamma^{ij}_{k,l}+\Gamma^i_{ks}\Gamma^{sj}_l-\Gamma^j_{ks}\Gamma^{si}_l,
\end{equation*}
and then
\begin{equation*}
\Gamma^{ij}_{l,k}-\Gamma^j_{kl}\Gamma^{li}_s=\Gamma^{ij}_{k,l}-\Gamma^i_{ks}\Gamma^{sj}_l.
\end{equation*}
We obtain that 
\begin{align}
\begin{split}\nabla_k\nabla^iW^j &=  \Gamma^{il}_kW^j_{,l}+g^{il}W^j_{,lk}-\Gamma^{ij}_lW^l_{,k}-\Gamma^{ij}_{k,l}W^l\\
&\hphantom{ciao}-\Gamma^i_{kl}\Gamma^{lj}_sW^s+\Gamma^j_{kl}g^{is}W^l_{,s}-\Gamma^{j}_{kl}g^{il}_sW^s+\Gamma^i_{kl}\Gamma^{lj}_sW^s.\end{split}
\end{align}
Now, by condition \eqref{14a} and by deleting two terms
\begin{equation*}
\nabla_k\nabla^iW^j=\Gamma^{il}_kW^j_{,l}+g^{il}W^j_{,lk}-\Gamma^{ij}_lW^l_{,k}-\Gamma^{ij}_{k,l}W^l
-\Gamma^j_{kl}g^{sl}W^i_{,s}.
\end{equation*}
Finally, $g^{sl}\Gamma_{kl}^jW^i_{,s}=-\Gamma^{sj}_kW^i_{,s}$ and the Lemma is proved.
\end{proof}

\vspace*{10mm}

\noindent Moreover, we can rewrite conditions \eqref{14a} and \eqref{16a} using a coordinate-free description (as  geometric conditions): 
\begin{equation}
\nabla^iW^j+\nabla^jW^i=-g^{ij}_lW^l+g^{il}W^j_{,l}+g^{lj}W^i_{,l}
\end{equation}and
\begin{equation}
\nabla^iW^j_{,x}=g^{il}W^j_{,xl}-\Gamma^{ij}_kW^k_x
\end{equation}
Collecting the previous results, we obtain the following
\begin{theorem}\label{thmloc}
Let us consider a local first-order homogeneous Hamiltonian operator and the associated linear vector function
\begin{equation*}
  A(\mathbf{p})^i = g^{ij}p_{j,x} + \Gamma^{ij}_k u^k_x p_j,
\end{equation*} if $F^i:  
u^i_t=V^i_j(u)u^j_x+W^i(x,u)$,
the following are equivalent:
\begin{itemize}
\item[(i)]$\ell_F(A^i(\mathbf{p}))=0$;
\item[(ii)] \begin{enumerate}
\item $\nabla^iV^j_k=\nabla^jV^i_k$;
\item $g^{ik}V^j_k=g^{jk}V^i_k$;
\item $\nabla^iW^j+\nabla^jW^i=0$;
\item $\nabla^iW^j_{,x}=0$;
\item $\nabla_k\nabla^iW^j=0$
\end{enumerate}\end{itemize}
\end{theorem}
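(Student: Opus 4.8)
The plan is to establish the equivalence by converting the single operator identity (i) into the finite family of scalar relations obtained by separating coefficients of independent jet monomials, and then recognizing each such relation as one of the geometric statements in (ii). Because every intermediate implication is itself an ``if and only if'', the whole chain yields both directions of the equivalence simultaneously.

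First I would expand $\ell_F(A^i(\mathbf{p}))$ exactly as displayed above, evaluating $\partial_t A^i$ by means of the tangent covering \eqref{a1} (so that $u^l_t$ is replaced by $V^l_k u^k_x + W^l$) and eliminating the time derivatives $p_{j,t}$ and $p_{j,xt}$ through the cotangent covering \eqref{a2} and its $x$-prolongation. On \eqref{a2} the remaining jet coordinates $u^j_x, u^j_{xx}, p_j, p_{j,x}, p_{j,xx}$ are functionally independent, so condition \eqref{nes}, that is $\ell_F(A^i(\mathbf{p}))=0$, holds identically precisely when the coefficient of each distinct monomial vanishes. Collecting the six monomial types $p_{l,xx}$, $u^l_{xx}p_k$, $u^k_x p_{j,x}$, $p_{j,x}$, $u^k_x p_j$ and $p_j$ then produces exactly the system \eqref{11a}--\eqref{16a}, together with the one extra relation mentioned in the text. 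These conditions split naturally into a $V$-block \eqref{11a}--\eqref{13a}, consisting of the terms that survive when $W=0$, and a $W$-block \eqref{14a}--\eqref{16a}.

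Next, since $A$ is assumed Hamiltonian, I would use the Dubrovin--Novikov relations \cite{DN83}, namely $g^{ij}_{,k}=\Gamma^{ij}_k+\Gamma^{ji}_k$, $\Gamma^{ij}_k=-g^{is}\Gamma^j_{sk}$ and $R^{ij}_{lk}=0$, so that $\nabla$ is the Levi-Civita connection of the flat metric $g_{ij}$. Under these the coordinate conditions match the geometric ones as follows: \eqref{11a} is the algebraic symmetry of $g^{il}V^j_l$ in $i,j$ and is equivalent to item (ii).2; relations \eqref{12a} and \eqref{13a} together recast as the covariant symmetry $\nabla^iV^j_k=\nabla^jV^i_k$, reproducing the homogeneous computation of \cite{VerVit1} and giving (ii).1; relation \eqref{14a} coincides with the right-hand side of the coordinate-free identity for $\nabla^iW^j+\nabla^jW^i$, hence is equivalent to (ii).3; relation \eqref{16a} matches the displayed expansion of $\nabla^iW^j_{,x}$, giving (ii).4; and \eqref{15a} is equivalent to (ii).5 by the preceding Lemma. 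The two coordinate-free rewritings stated just before the theorem are exactly the bridge for the third and fourth items.

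The main obstacle is twofold. The first is the index bookkeeping required to recast \eqref{12a}--\eqref{13a} and the $W$-relations into covariant form: one must repeatedly trade $g^{ij}_{,k}$ for $\Gamma^{ij}_k+\Gamma^{ji}_k$ and use $\Gamma^{ij}_k=-g^{is}\Gamma^j_{sk}$ to absorb the connection terms, precisely as in the proof of the Lemma. The second and more delicate point is the ``extra'' coefficient relation noted after \eqref{16a}: I would verify that it imposes nothing new, by differentiating the already established items of (ii) and invoking flatness $R^{ij}_{lk}=0$, thereby exhibiting it as a differential consequence, following \cite{VerVit1} for the $V$-part and the Lemma for the $W$-part. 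Once this redundancy is confirmed, the conditions \eqref{11a}--\eqref{16a} are equivalent to the five items of (ii), and hence (i)$\Leftrightarrow$(ii) follows.
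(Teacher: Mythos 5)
Your proposal is correct and follows essentially the same route as the paper: expand $\ell_F(A(\mathbf{p}))$ on the coverings, collect coefficients of the independent monomials to obtain \eqref{11a}--\eqref{16a} plus the redundant extra relation, identify \eqref{11a}--\eqref{13a} with the Tsarev-type conditions as in \cite{VerVit1}, and translate \eqref{14a}--\eqref{16a} into the covariant statements via the displayed coordinate-free identities and the Lemma on \eqref{15a}. The paper presents this as a chain of preliminary computations rather than a formal proof environment, but the content and the key steps coincide with yours.
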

B.A. Dubrovin and S.I.Novikov suggested to consider flat coordinates for a first-order homogeneous Hamiltonian operator, in which the operator takes the form 
\begin{equation}
A^{ij}=\eta^{ij}\partial_x
\end{equation}where $\eta^{ij}$ is a constant matrix. In these coordinates, the last three conditions in the previous theorem become easier:
\begin{align}
(3) \quad \eta^{il}W^j_{,l}+\eta^{jl}W^i_{,l}=0;\qquad 
(4)\quad  \eta^{il}W^j_{,xl}=0; \qquad 
(5) \quad -\eta^{ij}W^j_{,lk}=0. 
\end{align}
Explicitly solving them, we obtain \begin{equation}\label{e1}
W^i=a^i_ku^k+f^i(x),
\end{equation} where $a^j_k$ are arbitrary constants such that $\eta^{is}a_s^j=\eta^{js}a_s^i$ and $f^i(x)$ are arbitrary functions  $x$-dependent. In the particular case when $\eta$ is the antidiagonal unitary metric, $a^i_j=a^j_i$. 

\noindent In 2013, M.V. Pavlov and S.A. Zykov found a Bi-Hamiltonian structure for the constant astigmatism equation  \cite{PavZyk}
\begin{equation}\label{asty}
u_{tt}+\left(\frac{1}{u}\right)_{xx}+2=0
\end{equation} One of the structures analyzed is exactly of Dubrovin-Novikov type. Then, one can check that conditions in Theorem \ref{thmloc} are satisfied.
\begin{example}\label{examasti}
Let us introduce a new variable in order to write the equation \eqref{asty} as a quasilinear system
\begin{equation}\label{ast}
\begin{cases}u_t=v_x\\v_t=\frac{u_x}{u^2}-2x\end{cases},
\end{equation}
that is clearly a non-homogeneous  system
\begin{equation*}
V=\begin{pmatrix}
0&1\\\frac{1}{u^2}&0
\end{pmatrix}
\qquad \text{,} \qquad
W=\begin{pmatrix}
0\\-2x
\end{pmatrix}
\end{equation*}
The previous system admits a Hamiltonian formulation with the Dubrovin-Novikov operator
\begin{equation}
Q^{ij}=\begin{pmatrix}
0&1\\1&0
\end{pmatrix}\partial_x.
\end{equation}
The reader can check that (1--5) are satisfied via symbolic computations. 
\end{example}
\section{Non-local operators}In \cite{FerMok1}, E. V. Ferapontov and O. I. Mokhov proposed a generalization of first order homogeneous Hamiltonian operators adding a tail of nonlocal terms to the Dubrovin-Novikov operators. This extension is still releated with hydrodynamic type systems if the Hamiltonian density $h$ is of hydrodynamic type $h=h(\mathbf{u})$.

\noindent Now, for non-homogeneous quasilinear systems
 the following equality is verified
\begin{equation}
  \label{eq:9}
  \langle\ell_F(\varphi),\psi\rangle - \langle\varphi,\ell_F^*(\psi)\rangle =
  \sum_{i=1}^nD_i(a^i).
\end{equation}
If $\varphi$ is a symmetry, the first summand at the left-hand side vanishes.
Moreover, if we lift the remaining identity on the cotangent covering
$\ell^*_F(\mathbf{p})=0$, we have a conservation law on the right-hand
side. Dually, we could use a cosymmetry and lift the remaining terms on the
tangent covering. Let us  find an explicit
formula for the conservation law:
\begin{align}
  \label{eq:12}
  (\partial_t\varphi^i -( V^i_{j,k}u^j_x+W^i_{,k})\varphi^k - & V^i_j \partial_x\varphi^j)\psi_i\notag
  \\
  & - \varphi^i( - \partial_t\psi_i + (V^k_{i,j}u^j_x - V^k_{j,i}u^j_x-W^k_{,i}) \psi_k + V^k_i
  \partial_x\psi_k)\notag
  \\
  = &
  \partial_t(\varphi^i\psi_i) - \partial_x(V^i_j\varphi^j\psi_i)
\end{align}
Indeed, $-W^i_{,k}\varphi^k\psi_i+\varphi^iW^k_{,i}\psi_k=0$.

\noindent Therefore, we denote with $r$ the new nonlocal variable on the cotangent covering corresponding with each
symmetry $\varphi$ 
\begin{equation}
  \label{eq:189}
  r_t = V^i_j\varphi^jp_i,\qquad r_x = \varphi^ip_i,
\end{equation}
Nonlocal variables of such type were introduced by P. Kersten, I. S. Krasil'schik and A. Verbovetsky \cite{KerstenKrasilshchikVerbovetsky:GSDBTEq}.

\vspace{3mm}

\noindent As an example, let us firstly focus on the following type of operators
\begin{equation}\label{fmbra}
B^{ij}=g^{ij}\partial_x+b^{ij}_ku^k_x+cu^i_x\partial_x^{-1}u^j_x,
\end{equation} 
where $c$ is an arbitrary constant. Operators of this kind are associated to nonlocal Poisson brackets also known as Ferapontov-Mokhov brackets. Extending this definition to operators, then \eqref{fmbra} we call operators of this type \emph{Ferapontov-Mokhov operators}. In \cite{MokFer12}, the authors proved the Hamiltonianity conditions for the nondegenerate case:
\begin{theorem} If $\det g^{ij}\neq 0$, the Ferapontov-Mokhov operator is Hamiltonian if and only if $g^{ij}$ is a metric of constant curvature $c$ and $\Gamma^{ij}_k$ are Christoffel symbols of third kind compatible with $g^{ij}$.
\end{theorem}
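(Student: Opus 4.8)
The plan is to reduce the statement to the general theory of nonlocal Hamiltonian operators of hydrodynamic type and to recognise the Ferapontov--Mokhov tail as the special case of a single affinor equal to the identity. First I would dispose of the skew-adjointness requirement. Since $(\partial_x^{-1})^*=-\partial_x^{-1}$ and multiplication operators are self-adjoint, the nonlocal term $cu^i_x\partial_x^{-1}u^j_x$ is automatically skew with respect to the pairing: its formal adjoint is $-cu^j_x\partial_x^{-1}u^i_x$, which is exactly the tail of $-B^{ji}$. Hence skew-adjointness of $B$ is equivalent to skew-adjointness of the local part $g^{ij}\partial_x+\Gamma^{ij}_ku^k_x$, giving $g^{ij}=g^{ji}$ together with $g^{ij}_{,k}=\Gamma^{ij}_k+\Gamma^{ji}_k$. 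The latter is precisely the condition that $\Gamma^{ij}_k=-g^{is}\Gamma^j_{sk}$ be the contravariant connection coefficients of a connection compatible with the metric $g_{ij}$, i.e. \emph{Christoffel symbols of the third kind}.

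The heart of the proof is the Jacobi identity, equivalently the vanishing of the Schouten bracket $[B,B]=0$. Here I would invoke the general Ferapontov criterion for operators of the form $g^{ij}\partial_x+\Gamma^{ij}_ku^k_x+\sum_\alpha\epsilon_\alpha w^i_{(\alpha)k}u^k_x\partial_x^{-1}w^j_{(\alpha)l}u^l_x$: assuming skew-adjointness, the bracket is Poisson if and only if (a) each affinor $w_{(\alpha)}$ is symmetric with respect to $g$, i.e. $g_{ik}w^k_{(\alpha)j}=g_{jk}w^k_{(\alpha)i}$; (b) each affinor satisfies the Codazzi equation $\nabla_kw^i_{(\alpha)j}=\nabla_jw^i_{(\alpha)k}$; (c) the affinors commute, $w_{(\alpha)}w_{(\beta)}=w_{(\beta)}w_{(\alpha)}$; and (d) the generalised Gauss equation $R^{ij}_{lk}=\sum_\alpha\epsilon_\alpha(w^i_{(\alpha)k}w^j_{(\alpha)l}-w^i_{(\alpha)l}w^j_{(\alpha)k})$ holds, together with $\Gamma$ being Levi-Civita.

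For the Ferapontov--Mokhov operator the sum reduces to a single term with $\epsilon=c$ and affinor $w^i_j=\delta^i_j$. The identity affinor is trivially $g$-symmetric, is covariantly constant so that the Codazzi equation reads $0=0$, and commutes with itself, so conditions (a)--(c) hold automatically. Thus the only surviving constraint is the Gauss equation, which here becomes $R^{ij}_{lk}=c(\delta^i_k\delta^j_l-\delta^i_l\delta^j_k)$ (up to the sign fixed by the curvature convention of the previous section). This is exactly the coordinate expression of a metric of constant sectional curvature $c$; combined with $\Gamma$ being Levi-Civita, it establishes both implications.

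I expect the main obstacle to be the computation underlying the general criterion, namely the evaluation of the Schouten bracket in the presence of $\partial_x^{-1}$. The nonlocal contributions to $[B,B]$ must be shown to organise themselves into the Gauss, Codazzi and commutativity tensors, and controlling these terms --- rather than any step in the specialisation --- is where the genuine work lies. In the present setting it can be quoted from \cite{MokFer12}, or reproduced by passing to the cotangent covering and the nonlocal variable $r$ introduced in \eqref{eq:189}, which linearises the action of $\partial_x^{-1}$ and makes the nonlocal part of the Jacobi check amenable to the same collecting-by-jet-variable procedure used for the local operator above.
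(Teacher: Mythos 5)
The first thing to note is that the paper does not prove this statement at all: the theorem is imported verbatim from \cite{MokFer12}, so there is no internal proof to measure your attempt against. Judged on its own terms, your argument is correct in outline and follows the standard ``modern'' route: skew-adjointness is indeed carried entirely by the local part (your computation of the formal adjoint of $cu^i_x\partial_x^{-1}u^j_x$ is right), and the Jacobi identity is then read off from the general criterion for weakly nonlocal first-order operators with affinor tails, specialised to the single affinor $w^i_j=\delta^i_j$ with weight $c$. The $g$-symmetry, Codazzi and commutativity conditions trivialise, and the Gauss equation collapses to $R^{ij}_{lk}=c(\delta^i_k\delta^j_l-\delta^i_l\delta^j_k)$, which is the constant-curvature condition; the hypothesis $\det g^{ij}\neq 0$ is used precisely to invert $g$, form the Levi-Civita connection and speak of curvature, so the specialisation is clean and both implications follow from the ``if and only if'' of the general criterion.

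The honest caveat is where the mathematical content lives. Your proof defers all of the work to the general affinor criterion, which you propose to quote; but that criterion is Ferapontov's later theorem (\cite{F95:_nl_ho}, see also \cite{ferapontov92:_non_hamil}), not \cite{MokFer12} --- historically Mokhov and Ferapontov established the constant-curvature case first by a direct computation of the Jacobi identity for this specific bracket, and the general criterion descends from it. So as written your argument is a citation chain rather than a self-contained proof, and quoting the general criterion ``from \cite{MokFer12}'' would be close to circular. A second, milder point: the covering-based device you suggest for reproducing the hard computation is not quite the right tool in the form it appears in this paper. The condition $\ell_F(B(\mathbf{p},\mathbf{r}))=0$ on the cotangent covering, with the nonlocal variable $r$ of \eqref{eq:189}, is a \emph{compatibility} condition between an operator and a given system; the paper is explicit that it controls neither skew-adjointness nor the vanishing of the Schouten bracket. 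Verifying $[B,B]=0$ in the covering formalism is possible, but it requires the $\ell^{*}$-covering computation of the Schouten bracket itself, not the collecting-by-jet-variable procedure used for the linearization above. If you want a proof rather than a reduction, that Schouten-bracket computation --- which you correctly identify as the genuine obstacle --- still has to be carried out.
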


\noindent In order to find compatibility conditions between quasilinear systems and Ferapontov-Mokhov operators we briefly recall that in \cite{VerVit1} the authors proved the following result for homogeneous systems
\begin{theorem}\label{sec:non-local-operators-1}
  Let us consider a non-local first order  operator\begin{equation}
  B^{ij}=g^{ij}\partial_x+\Gamma^{ij}_ku^k_x+w^i_su^s_x\partial_x^{-1}w^j_lu^l_x,\end{equation} whose
  non-local part is defined by a hydrodynamic type symmetry
  $\varphi^i = w^i_ju^j_x$, and the hydrodynamic type
  system \begin{equation}\label{eq11}
  u^i_t=V^i_j(u)u^j_x.
  \end{equation} Then, the compatibility conditions
  $\ell_F(B(\mathbf{p}))=0$ for the operator $B$ to be a Hamiltonian operator
  for the hydrodynamic type system~\eqref{eq11} are equivalent to the following
  system:
\begin{enumerate}
\item $g^{ik}V^j_k=g^{jk}V^i_k$,
\item $\nabla^iV^j_k=\nabla^jV^i_k$.
\end{enumerate}\end{theorem}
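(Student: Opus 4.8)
The plan is to mirror the covering computation that produced Theorem~\ref{thmloc} in the local case, the single new ingredient being the nonlocal variable. First I would attach to $B$ the linear vector function
\[
B^{i}(\mathbf{p}) = g^{ij}p_{j,x} + \Gamma^{ij}_{k}u^{k}_{x}p_{j} + \varphi^{i}r,\qquad \varphi^{i}=w^{i}_{s}u^{s}_{x},
\]
where $r$ is the nonlocal variable of \eqref{eq:189} associated with the symmetry $\varphi$, so that on the cotangent covering $r_{x}=\varphi^{j}p_{j}$ and $r_{t}=V^{i}_{j}\varphi^{j}p_{i}$. By linearity $\ell_{F}(B^{i})=\ell_{F}(A^{i})+\ell_{F}(\varphi^{i}r)$, where $A^{i}(\mathbf{p})=g^{ij}p_{j,x}+\Gamma^{ij}_{k}u^{k}_{x}p_{j}$ is the Dubrovin--Novikov part already linearised in Section~3.

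For the local summand I would simply specialise that computation to the homogeneous case $W^{i}=0$. Collecting the independent monomials $p_{l,xx}$, $u^{l}_{xx}p_{k}$, $u^{k}_{x}p_{j,x}$, $p_{j,x}$ and the quadratic $u^{h}_{x}u^{l}_{x}p_{k}$, the coefficient of $p_{l,xx}$ reproduces (1) (this is exactly \eqref{11a}), while the terms linear in $u_{x}$ assemble into (2), precisely as in the purely local theorem. The genuinely new work sits in the nonlocal summand. Here the decisive simplification is that $\varphi$ is a symmetry: expanding $\ell_{F}(\varphi^{i}r)$ and grouping the terms proportional to $r$ gives $r\,\ell_{F}(\varphi)^{i}=0$, so every $r$-linear contribution drops out and one is left only with $\varphi^{i}r_{t}-V^{i}_{j}\varphi^{j}r_{x}$. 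Substituting $r_{t}$ and $r_{x}$ from \eqref{eq:189} turns this into the purely algebraic, quadratic-in-$u_{x}$ expression $\left(\varphi^{i}V^{a}_{b}\varphi^{b}-V^{i}_{j}\varphi^{j}\varphi^{a}\right)p_{a}$.

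The crux is then to merge this with the quadratic-in-$u_{x}$ terms of the local part, since those are the only monomials the nonlocal tail affects. I would collect the full coefficient of $p_{a}u^{s}_{x}u^{m}_{x}$ (symmetrised in $s,m$) and rewrite the local $\partial^{2}V$, $\Gamma\!\cdot\!\partial V$ and $\Gamma\Gamma$ contributions covariantly by means of the compatibility relation $g^{ij}_{,k}=\Gamma^{ij}_{k}+\Gamma^{ji}_{k}$. The outcome is a symmetrised covariant derivative of the condition-(2) expression $\nabla^{i}V^{a}_{s}-\nabla^{a}V^{i}_{s}$ together with a curvature term $R^{\cdot\cdot}_{\cdot\cdot}$ contracted with $V$. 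Invoking the Hamiltonianity of $B$ itself, namely that $g$ has constant curvature with Riemann tensor fixed by the nonlocal tail ($R^{ij}_{kl}$ expressible through $w^{i}_{s}$), the Ricci identity converts this curvature term into exactly the piece $\varphi^{i}V^{a}_{b}\varphi^{b}-V^{i}_{j}\varphi^{j}\varphi^{a}$ produced by $\ell_{F}(\varphi^{i}r)$. The two cancel, so the quadratic condition collapses to a differential consequence of (1) and (2), just as the ``further condition'' did in the flat local case.

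I expect the main obstacle to be precisely this cancellation at quadratic order: one must carry out the index bookkeeping for all the second-derivative and connection terms, pass to covariant form, and then verify that the constant-curvature identity for $R^{ij}_{kl}$ matches the symmetry-generated term slot by slot. Once that matching is established, every monomial coefficient of $\ell_{F}(B(\mathbf{p}))$ vanishes if and only if (1) and (2) hold, which yields the asserted equivalence in both directions simultaneously.
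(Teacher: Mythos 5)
Your proposal is correct and follows essentially the route the paper relies on: the paper itself only recalls this theorem from \cite{VerVit1}, but its in-house proof of the analogous Lemma \ref{thm3} uses exactly your decomposition $\ell_F(B)=\ell_F(A)+\ell_F(\varphi^i r)$, with the symmetry hypothesis annihilating the coefficient of $r$ and the substitution of $r_x,r_t$ from \eqref{eq:189} leaving the purely algebraic residue $(\varphi^iV^a_b\varphi^b-V^i_j\varphi^j\varphi^a)p_a$. You also correctly identify the one step that distinguishes this case from the isometry case (where the residue is of order zero in $u_x$ and survives as the extra condition (4) of Lemma \ref{thm3}): here the residue is quadratic in $u_x$ and cancels against the curvature contribution that the non-flat metric injects into the coefficient of $u^s_xu^m_xp_a$, via the Hamiltonianity relation expressing $R^{ij}_{kl}$ through $w^i_s$ — the index bookkeeping you defer is mechanical and is carried out in \cite{VerVit1}, not in this paper.
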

As Corollary we obtain that when $\varphi^i=\delta^i_ju^j_x$ (that is when $w^i_j=\delta^i_j$) the Ferapontov-Mokhov operators are compatible with a homogeneous hydrodynamic type system if Tsarev's conditions are satisfied and $\varphi^i=u^i_x$ is a symmetry for the system. Note that $\varphi^i=u^i_x$ is a symmetry for the system if and only if $V^i_j$ does not depend on $x$.  As an example, let us consider the Chaplygin equation: 
\begin{example}The Chaplygin gas equation is given by the following system
\begin{equation}\label{syscha}
\begin{cases}u_t=uu_x+\frac{1}{v^3}v_x\\v_t=vu_x+uv_x
\end{cases}.
\end{equation}Applying to  \eqref{syscha} the change of variables $u=\tilde{u}-\tilde{v}^{-1}$ and $v=\tilde{u}+\tilde{v}^{-1}$, the system can be written (see \cite{mokhov98:_sympl_poiss}) in diagonal form:
\begin{equation*}
\begin{cases}u_t=vu_x\\v_t=uv_x
\end{cases}
\end{equation*} \noindent One can easily check that the previous theorem is satisfied for the structure \eqref{fmbra} given by the contravariant metric
\begin{equation}
g^{ij}=\begin{pmatrix}-[(c_1+k)+c_2u+c_3u^2](u-v)^2&0\\
0&[c_1+c_2v+c_3v^2](u-v)^2\end{pmatrix}
\end{equation} whose constant Riemannian curvature is $k$. 
\end{example}
\begin{remark} We can generalize the previous result for non-homogeneous quasilinear systems of first order PDEs. We associate to \eqref{fmbra} the linear vector function \begin{equation}
B^i(\mathbf{p,r})=g^{ij}p_{j,x}+\Gamma^{ij}_ku^k_xp_j+cu^i_xr,
\end{equation}
and obtain that given a non-homogeneous quasilinear system
\begin{equation}\label{oop}
F^i: \quad u^i_t=V^i_j(u)u^j_x+W^i(u)
\end{equation} and an Hamiltonian operator \eqref{fmbra},  the following conditions are equivalent:
\begin{itemize}
\item[(i)] $\ell_F(B(\mathbf{p,r}))=0$;
\item[(ii)]\begin{enumerate}\item $\nabla^iV^j_k=\nabla^jV^i_k$;
\item $g^{ik}V^j_k=g^{jk}V^i_k$;
\item $\nabla^iW^j+\nabla^jW^i=0$;
\item $\nabla^iW^j_{,x}=0$;
\item $\nabla_k\nabla^iW^j=0$.
\end{enumerate}
\end{itemize}\end{remark}


\noindent

\subsection{First-order  operators extended by isometries}
Finally, let us focus on particular nonlocal  operators 
\begin{equation}\label{3}
B^{ij}=g^{ij}\partial_x+\Gamma^{ij}_ku^k_x+\alpha f^i\partial_x^{-1}f^j,
\end{equation}
where $\alpha$ is a real constant and $f=f^i(\textbf{u})\partial_{u^i}$ is a vector field. Such operators were introduced by E. V. Ferapontov in \cite{ferapontov92:_non_hamil}, who also proved the conditions of hamiltonianity:
\begin{theorem}[\cite{ferapontov92:_non_hamil}]\label{thm1}
Necessary and sufficient conditions for the Poisson bracket determinated by \eqref{3} to be Hamiltonian are that
\begin{equation}
B_0^{ij}=g^{ij}\partial_x+\Gamma^{ij}_ku^k_x
\end{equation}is Hamiltonian and $f$ satisfies
\begin{enumerate}
\item $\nabla^if^j+\nabla^jf^i=0$ and
\item $f^k\nabla^if^j+<\text{cyclic}>=0$.
\end{enumerate}where $\nabla^j=g^{js}\nabla_s$ and $\nabla_s$ is the Levi-Civita connection defined by $g_{ij}$.
\end{theorem}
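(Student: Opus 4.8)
The plan is to verify directly the two defining properties of a Hamiltonian operator for the Poisson bracket
\begin{equation*}
\{I,J\}=\int \frac{\delta I}{\delta u^i}\,B^{ij}\,\frac{\delta J}{\delta u^j}\,dx,
\end{equation*}
namely skew-symmetry of $B$ and the Jacobi identity (equivalently $[B,B]=0$ for the Schouten bracket), and to show that together they are equivalent to the stated conditions. I would dispose of skew-symmetry first. Writing $B^{ij}=B_0^{ij}+\alpha f^i\partial_x^{-1}f^j$, the nonlocal tail is automatically skew-adjoint: since $(\partial_x^{-1})^*=-\partial_x^{-1}$ one has $(\alpha f^i\partial_x^{-1}f^j)^*=-\alpha f^j\partial_x^{-1}f^i$, which is skew for every vector field $f$. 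Hence skew-symmetry of $B$ reduces to skew-adjointness of $B_0$, i.e.\ to $g^{ij}$ being symmetric together with $g^{ij}_{,k}=\Gamma^{ij}_k+\Gamma^{ji}_k$, the first half of the condition "$B_0$ is Hamiltonian".

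For the Jacobi identity I would pass to the bivector formalism on the cotangent covering, introducing the nonlocal variable $r$ with $r_x=f^jp_j$ so that $B$ is represented by a bivector $\mathcal{B}=\mathcal{B}_0+\alpha\,\mathcal{B}_1$, where $\mathcal{B}_0$ is the local Dubrovin--Novikov part and $\mathcal{B}_1$ the single tail carrying one factor of $r$. The Schouten bracket then expands as
\begin{equation*}
[\mathcal{B},\mathcal{B}]=[\mathcal{B}_0,\mathcal{B}_0]+2\alpha\,[\mathcal{B}_0,\mathcal{B}_1]+\alpha^2\,[\mathcal{B}_1,\mathcal{B}_1],
\end{equation*}
and I would impose its vanishing by collecting terms according to their degree of nonlocality and differential order. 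The genuinely doubly-nonlocal part survives only in $[\mathcal{B}_1,\mathcal{B}_1]$ and, after integrating the two $\partial_x^{-1}$ by parts, reduces to the totally cyclic tensor identity $f^k\nabla^if^j+f^i\nabla^jf^k+f^j\nabla^kf^i=0$, which is condition $(2)$. The singly-nonlocal part lives in $[\mathcal{B}_0,\mathcal{B}_1]$ and, once $\mathcal{B}_0$ is assumed skew, forces the covariant symmetrization $\nabla^if^j+\nabla^jf^i$ to vanish, i.e.\ $f$ to be a Killing field --- condition $(1)$, which is what justifies the phrase "extended by isometries". The purely local top-order part sits in $[\mathcal{B}_0,\mathcal{B}_0]$ and, with the skew-symmetry already in hand, is equivalent to flatness of $g$ and $\Gamma^{ij}_k=-g^{is}\Gamma^j_{sk}$; here I would simply invoke the Dubrovin--Novikov theorem recalled earlier, which packages this as "$B_0$ is Hamiltonian". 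Since each implication is reversible, the three conditions together are equivalent to $[B,B]=0$.

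The step I expect to be delicate is the precise bookkeeping of the $\partial_x^{-1}$ terms in the mixed and doubly-nonlocal brackets. When the $\partial_x$ of $\mathcal{B}_0$ meets the $\partial_x^{-1}$ of $\mathcal{B}_1$ the inverse can be resolved, producing local contributions; one must check that these land at the differential order of the connection and merge with the Killing computation rather than the curvature one. This is exactly why the requirement on $g$ is \emph{flatness} and not constant curvature: in contrast with the Ferapontov--Mokhov tail $c\,u^i_x\partial_x^{-1}u^j_x$, whose prefactor $u^i_x$ carries an explicit $x$-derivative feeding the top-order curvature term, the prefactor $f^i(u)$ here carries no factor of $u^j_x$, so its resolved contributions are of lower order and leave the flatness condition untouched. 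Verifying this order count carefully, and confirming that no residual nonlocal terms spoil the clean separation above, is the heart of the argument.
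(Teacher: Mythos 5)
This theorem is quoted from Ferapontov's 1992 paper and is stated in the article without proof, so there is no internal argument to compare yours against; I can only assess the proposal on its own terms. Your overall strategy --- settle skew-symmetry by noting $(\partial_x^{-1})^*=-\partial_x^{-1}$, then expand $[\mathcal{B},\mathcal{B}]=[\mathcal{B}_0,\mathcal{B}_0]+2\alpha[\mathcal{B}_0,\mathcal{B}_1]+\alpha^2[\mathcal{B}_1,\mathcal{B}_1]$ and sort the Jacobi identity by degree of nonlocality and by degree in $u_x$ --- is the standard route for weakly nonlocal brackets, and your observation that the tail $f^i(\mathbf{u})$, carrying no factor of $u^j_x$, cannot feed the top-degree curvature term (unlike the Ferapontov--Mokhov tail $cu^i_x\partial_x^{-1}u^j_x$) is a correct explanation of why flatness, rather than constant curvature, is required here.

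There is, however, a genuine gap beyond the bookkeeping you already flag as delicate. You attribute the cyclic condition $f^k\nabla^if^j+f^i\nabla^jf^k+f^j\nabla^kf^i=0$ entirely to the doubly-nonlocal piece $[\mathcal{B}_1,\mathcal{B}_1]$. That piece is built from $f^i$ and its partial derivatives alone: it contains neither $g^{ij}$ nor $\Gamma^{ij}_k$, while the stated condition involves both through $\nabla^i=g^{is}\nabla_s$. So $[\mathcal{B}_1,\mathcal{B}_1]=0$ can at best produce a raw identity in $f^i$ and $f^i_{,k}$, and one must still show that, modulo the Killing condition and the Dubrovin--Novikov conditions coming from the other pieces, this raw identity is equivalent to the covariant cyclic one --- equivalently, that the local residues produced when the $\partial_x$ of $\mathcal{B}_0$ resolves a $\partial_x^{-1}$ of $\mathcal{B}_1$ supply exactly the missing $\Gamma$-terms. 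None of this covariantization is carried out, and it is precisely where conditions (1) and (2) become entangled. The same caveat applies to the assertion that the graded pieces ``must vanish separately'': for weakly nonlocal operators this is a claim about the independence of the various local and nonlocal monomials (in the covering language, of the monomials in $p_j$, $p_{j,x}$ and $r$), which has to be argued rather than assumed --- the Ferapontov--Mokhov case you yourself invoke is exactly an instance where naive separation of local and nonlocal contributions fails. In short, the plan is the right one, but the Schouten bracket is never actually expanded and the identification of its components with conditions (1) and (2) is asserted rather than derived, so the proof is not yet complete.
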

In particular, $f$ must be an infinitesimal isometry for $g$. 

\noindent These kind of operators are largely present in Hamiltonian formalism for PDEs: Ferapontov presented three examples (the Nonlinear Schroedinger equation, the Heisenberg  magnet and the Landau-Lifshits equations).  Moreover, a role of such operators in integrable systems was introduced by M. V. Pavlov, R. Vitolo and the author in \cite{PavVerVit1}. In the paper, the authors presented an  example of non-homogeneous hydrodynamic type system in three components admitting a bi-hamiltonian structure. Such a structure is given by a Dubrovin-Novikov operator and one of type \eqref{3}.

\noindent Let us consider  nonlocal operators  \eqref{3} with $\alpha=1$. The operator must satisfy the conditions presented in Theorem \eqref{thm1} and it can be 
identified with the linear vector function:
\begin{equation}
  \label{eq:6}
  B^i(\mathbf{p,r}) = g^{ij}p_{j,x} + \Gamma^{ij}_k u^k_x p_j + f^i r,
\end{equation}
where $r$ is the nonlocal variable defined by $r_x = f^jp_j$ and $r_t = V^i_jf^jp_i$. Then, for the homogeneous case $W^i(x,\textbf{u})=0$, the following result is proved.

\begin{lemma}\label{thm3} For the Hamiltonian operator $B^{ij}$ in \eqref{3} and the homogeneous hydrodynamic type system 
\begin{equation}\label{sys}
u^i_t=V^i_j(u)u^j_x,
\end{equation}
the following are equivalent:
\begin{itemize}
\item[(i)]$\ell_F(B(\mathbf{p,r}))=0$;
\item[(ii)]\begin{enumerate}
\item $f$ is a symmetry for the system \eqref{sys};
\item $\nabla^iV^j_k=\nabla^jV^i_k;$
\item $g^{ik}V^j_k=g^{jk}V^i_k;$
\item $f^kV^j_kf^i=f^jV^i_kf^k$.\end{enumerate}
\end{itemize}
\end{lemma}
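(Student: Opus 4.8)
The plan is to carry out the same coefficient-matching strategy already used for the local operator, now applied to the linear vector function \eqref{eq:6}, isolating the genuinely new contributions of the nonlocal tail $f^ir$. Concretely I would expand
$$\ell_F(B^i(\mathbf{p,r}))=\partial_tB^i-V^i_{j,l}u^j_xB^l-V^i_j\partial_xB^j$$
for the homogeneous system \eqref{sys} (so $W^i\equiv0$), replacing every $u^i_t$ by $V^i_ju^j_x$ and every derivative of the nonlocal variable $r$ by its defining relations $r_x=f^jp_j$ and $r_t=V^i_jf^jp_i$ from \eqref{eq:6}. The outcome is a polynomial in the covering monomials $p_{j,xx}$, $u^l_{xx}p_k$, $u^k_xp_{j,x}$, $p_{j,x}$, $u^k_xp_j$, $p_j$ already met in the local computation, augmented by the new nonlocal monomial $u^m_xr$ and the bare $p_k$ produced when $r_x,r_t$ are substituted; requiring $\ell_F(B)=0$ amounts to annihilating each coefficient separately.

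First I would dispose of the local block. The terms built solely from $g^{ij}p_{j,x}+\Gamma^{ij}_ku^k_xp_j$ reproduce verbatim the homogeneous ($W^i=0$) specialisation of the computation leading to \eqref{11a}--\eqref{13a}, so their vanishing is equivalent to those identities; as recorded in Theorem \ref{sec:non-local-operators-1}, once the Hamiltonianity of $B_0^{ij}=g^{ij}\partial_x+\Gamma^{ij}_ku^k_x$ is used they collapse to the two coordinate-free conditions $g^{ik}V^j_k=g^{jk}V^i_k$ and $\nabla^iV^j_k=\nabla^jV^i_k$, i.e. items (2) and (3) of (ii). It then remains only to account for the cross terms and the purely nonlocal terms generated by $f^ir$.

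The decisive step is to collect these nonlocal contributions. Differentiating the tail gives $\partial_t(f^ir)=f^i_{,l}V^l_mu^m_x\,r+f^ir_t$ and $-V^i_j\partial_x(f^jr)=-V^i_jf^j_{,m}u^m_x\,r-V^i_jf^jr_x$, while the algebraic term of the linearisation contributes $-V^i_{m,l}u^m_xf^l\,r$. The coefficient of $u^m_xr$ is then $f^i_{,l}V^l_m-V^i_{m,l}f^l-V^i_jf^j_{,m}$, which is exactly $\ell_F(f)^i$ read off against $u^m_x$; its vanishing is item (1), that $f$ be a symmetry of \eqref{sys}. Substituting $r_t=V^k_jf^jp_k$ and $r_x=f^kp_k$ turns the remaining tail into $f^iV^k_jf^j\,p_k-V^i_jf^jf^k\,p_k$, whose coefficient of $p_k$ is $f^iV^k_jf^j-f^kV^i_jf^j$; setting it to zero gives $f^i(V^k_jf^j)=f^k(V^i_jf^j)$, which is item (4) after relabelling $j\leftrightarrow k$. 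Since the monomials $u^m_xr$ and the bare $p_k$ are independent of all local monomials listed above, the two blocks decouple and no extra condition appears. This yields (i)$\Rightarrow$(ii), and reading the same equalities backwards — each of (1)--(4) killing one coefficient block — gives (ii)$\Rightarrow$(i).

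The main obstacle I anticipate is the bookkeeping of the nonlocal variable $r$. Unlike the jet covering variables $\mathbf p$, it is not a jet coordinate but is defined through the conservation law \eqref{eq:189}, so one must check that $r_x=f^jp_j$ and $r_t=V^i_jf^jp_i$ are mutually consistent on the covering, namely $D_t(r_x)=D_x(r_t)$. This compatibility holds precisely because $\varphi^i=f^i$ furnishes the conservation law \eqref{eq:12}, so the symmetry requirement surfaces twice: once as the coefficient of $u^m_xr$ and once as the well-definedness of $r$. Keeping these two roles apart, and confirming that no hidden cancellation fuses item (4) with the local identities, is where the care lies; everything else is the routine coefficient extraction already rehearsed for the local operator.
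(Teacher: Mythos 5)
Your proposal is correct and follows essentially the same route as the paper: split $\ell_F(B)$ into the local block (handled by the already-established Tsarev-type conditions (2)--(3)) plus the tail $\ell_F(f^ir)$, substitute $r_x=f^jp_j$, $r_t=V^i_jf^jp_i$, and read off the symmetry condition from the coefficient of $r$ and condition (4) from the resulting coefficient of $p_k$. The only addition is your remark on the consistency $D_t(r_x)=D_x(r_t)$ of the nonlocal variable, which the paper leaves implicit in its construction of $r$ via the conservation law.
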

\begin{proof}
\noindent We first notice that conditions (2) and (3) of the system imply $\ell_F(A)=0$, as necessary and sufficient conditions of existence of the Hamiltonian structure (see Theorem 5,  \cite{VerVit1}). Therefore, 
\begin{align}\begin{split}
\ell_F(\tilde{A})&=\ell_F(A)+\ell_F(f^ir)\\
&=0+\partial_t(f^ir)-V^i_{j,k}u^j_xf^kr-V^i_j\partial_x(f^jr)\\
&=\partial_t(f^i)r+r_tf^i-V^i_{j,k}u^j_xf^kr-V^i_j\partial_x(f^i)r-V^i_jf^jr_x\end{split}
\end{align}
By substituing $r_x=f^kp_k$ and $r_t=V^k_jf^jp_k$
\begin{align}
\begin{split}&\partial_t(f^i)r+(V^k_jf^jp_k)f^i-V^i_{j,k}u^j_xf^kr-V^i_j\partial_x(f^i)r-V^i_jf^j(f^kp_k)\\
&=(\partial_t(f^i)-f^kV^i_{j,k}u^j_x-V^i_j\partial_x(f^j))r+(V^k_jf^jf^k-V^i_jf^jf^k)p_k\\
&=\ell_F(f)+(f^jV^k_jf^i-f^jV^i_jf^k)p_k\\
&=0.
\end{split}
\end{align}

\noindent Viceversa,  by computing the linearization of the operator $\tilde{A}$ we obtain:
\begin{align}\ell_F(\tilde{A})=
 \begin{split}  &\label{ag0}( - V^i_kg^{kj} + V^j_k g^{ki})p_{j,xx} \end{split}+
  \\
  \begin{split}\label{ag1}
 & \Big( g^{ij}_kV^k_l u^l_x + g^{ik}(V^j_{k,m}u^m_x - V^j_{m,k}u^m_x)
    + g^{ik}V^j_{k,m}u^m_x+ \Gamma^{ik}_h u^h_xV^j_k
    \\
    &\hphantom{ciao} - V^i_{l,k}u^l_xg^{kj}  - V^i_kg^{kj}_h u^h_x
    - V^i_k\Gamma^{kj}_h u^h_x\Big) p_{j,x} +
  \end{split}
  \\
  \begin{split}\label{ag2}
  & \Big( g^{ik}(V^j_{k,ml}u^l_xu^m_x+V^j_{k,m}u^m_{xx} - V^j_{m,kl}u^l_xu^m_x-V^j_{m,k}u^m_{xx})\\
  &\hphantom{ciao}
      + \Gamma^{ij}_{k,h}V^h_l u^l_x u^k_x +\Gamma^{ij}_{k}V^k_{l,m}u^m_x u^l_x+\Gamma^{ij}_kV^k_lu^l_{xx}+
  \Gamma^{ik}_l u^l_x(V^j_{k,h}u^h_x \\&\hphantom{ciao}- V^j_{h,k}u^h_x) 
  - V^i_{l,k}u^l_x\Gamma^{kj}_h u^h_x  
  - V^i_k( \Gamma^{kj}_{h,l}u^l_x u^h_x + \\&\hphantom{ciao}\Gamma^{kj}_h u^h_{xx})+(V^j_hf^hf^i-V^i_hf^hf^j)\Big)p_j+\end{split}
\\\begin{split}\label{ag3}
  &  \hphantom{ciao}\Big(\partial_t(f^i)-f^kV^i_{j,k}u^j_x-V^i_j\partial_x(f^j)\Big)r.\end{split}
\end{align}
Now we annihilate coefficients of polynomials in $p_{j,x}$, $p_{j,xx}$ and $p_j$. The coefficients of $p_{j,xx}$ and $p_{j,x}$ in \eqref{ag0} and \eqref{ag1} are the same as in Tsarev Theorem of compatibility  \cite{VerVit1}. Moreover, by annihilating the coefficient of the variable $r$ in \eqref{ag3} we notice that $f$ must be a symmetry of the system \eqref{sys}. The coefficient of $p_{j}$ in \eqref{ag2}  is:
\begin{align}\begin{split}
& g^{ik}(V^j_{k,ml}u^l_xu^m_x+V^j_{k,m}u^m_{xx} - V^j_{m,kl}u^l_xu^m_x-V^j_{m,k}u^m_{xx})
      + \Gamma^{ij}_{k,h}V^h_l u^l_x u^k_x + 
  \\ &\hphantom{ciao} +\Gamma^{ij}_{k}V^k_{l,m}u^m_x u^l_x+\Gamma^{ij}_kV^k_lu^l_{xx}+
  \\
  & \hphantom{ciao}
  \Gamma^{ik}_l u^l_x(V^j_{k,h}u^h_x - V^j_{h,k}u^h_x) 
  - V^i_{l,k}u^l_x\Gamma^{kj}_h u^h_x  
   \\
  & \hphantom{ciao}
  - V^i_k( \Gamma^{kj}_{h,l}u^l_x u^h_x + \Gamma^{kj}_h u^h_{xx})+
  \\
  & \hphantom{ciao}
  +V^j_hf^hf^i-V^i_hf^hf^j.\end{split}
\end{align}
Therefore, condition (4) is satisfied: \begin{equation}V^k_jf^jf^i-V^i_jf^jf^k=0.\end{equation}
Finally, the coefficients of $u^l_xu^h_x$ and $u_{xx}^l$ are just differential consequences of the previous conditions. Then, conditions (2) and (3) are satisfied.
\end{proof}

\vspace{5mm}

\noindent For non-homogeneous hydrodynamic type systems the following is valid

\begin{theorem}\label{thm1nonloc}
Let $B^{ij}$ be an operator of type \eqref{3}, satisfying the hamiltonianity conditions in Theorem \ref{thm3} and 
\begin{equation}\label{sys11}
u^i_t=V^i_j(u)u^j_x+W^i(x,{u})
\end{equation}
a non-homogeneous quasilinear system, then the following conditions are equivalent:
\begin{itemize}
\item[(i)]$\ell_F(\tilde{A})=0$;
\item[(ii)]\begin{enumerate}
\item $f$ is a symmetry for the system \eqref{sys11};
\item $\nabla^iV^j_k=\nabla^jV^i_k$;
\item $g^{ik}V^j_k=g^{jk}V^i_k$;
\item $\nabla^iW^j_x-f^kV^j_kf^i+f^kV^i_kf^j=0$;
\item $\nabla^iW^j+\nabla^jW^i=0$;
\item $\nabla_k\nabla^iW^j=0$.\end{enumerate}
\end{itemize}
\end{theorem}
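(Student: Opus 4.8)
The plan is to exploit the linearity of $\ell_F$ together with the additive splitting $\tilde A^{i}=A^{i}(\mathbf{p})+f^{i}r$ of the linear vector function \eqref{eq:6} into the local Dubrovin--Novikov part $A^{i}(\mathbf{p})=g^{ij}p_{j,x}+\Gamma^{ij}_k u^k_x p_j$ and the nonlocal tail $f^{i}r$, exactly as in the proof of Lemma~\ref{thm3}. Since $\ell_F(\tilde A)=\ell_F(A)+\ell_F(f^{i}r)$, the statement reduces to annihilating the coefficient of each independent monomial in the jet and nonlocal variables, namely $p_{j,xx}$, $u^k_x p_{j,x}$, $p_{j,x}$, $u^l_x p_j$, $p_j$ and $r$; reading off this list of coefficients will yield both implications (i)$\Leftrightarrow$(ii) at once.

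First I would recall the local expansion. The terms of $\ell_F(A)$ carrying at least one factor $u^k_x$ or $u^k_{xx}$ are precisely those computed in the derivation of Theorem~\ref{thmloc}; after using the standing Hamiltonianity identities $g^{ij}_k=\Gamma^{ij}_k+\Gamma^{ji}_k$ and $R^{ij}_{lk}=0$ (available from the assumed Hamiltonianity of $B$, Theorem~\ref{thm1}) they reproduce the Tsarev-type conditions (2)--(3) and, via the Lemma identifying \eqref{15a} with $\nabla_k\nabla^iW^j$, the second-order conditions (5)--(6). The crucial point to keep in reserve is that none of these coefficients involves $\nabla^iW^j_{,x}$: the latter appears only as the coefficient of the pure monomial $p_j$, equal to $-g^{il}W^j_{,xl}+\Gamma^{ij}_kW^k_{,x}=-\nabla^iW^j_{,x}$ by \eqref{16a}.

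Next I would expand the nonlocal tail. Writing $\partial_t(f^{i}r)=(\partial_t f^{i})r+f^{i}r_t$ and $\partial_x(f^{j}r)=(\partial_x f^{j})r+f^{j}r_x$ and inserting the defining relations $r_x=f^{k}p_k$, $r_t=V^{k}_{l}f^{l}p_k$ from \eqref{eq:189} (which are $W$-free thanks to the cancellation noted after \eqref{eq:12}), one obtains
\[
\ell_F(f^{i}r)=\ell_F(f)^{i}\,r+\bigl(f^{l}V^{j}_{l}f^{i}-f^{l}V^{i}_{l}f^{j}\bigr)p_j,
\]
the residual $W$-dependence being absorbed into $\ell_F(f)^{i}$. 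Thus the tail contributes \emph{only} to the coefficient of $r$ and to that of the pure monomial $p_j$, and never to $p_{j,xx}$, $p_{j,x}$ or any monomial bearing a factor $u^k_x$. Vanishing of the coefficient of $r$ is exactly $\ell_F(f)=0$, condition (1).

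The decisive step is then to collect the two contributions to the pure $p_j$ coefficient: the local $-\nabla^iW^j_{,x}$ and the nonlocal $f^{l}V^{j}_{l}f^{i}-f^{l}V^{i}_{l}f^{j}$. Setting their sum to zero gives precisely condition (4), $\nabla^iW^j_{,x}-f^{k}V^{j}_{k}f^{i}+f^{k}V^{i}_{k}f^{j}=0$; in the homogeneous limit $W=0$ this collapses to condition (4) of Lemma~\ref{thm3}, a useful consistency check. I expect the principal difficulty to lie in the degree bookkeeping in $u_x$: one must verify that the nonlocal term lands on the pure $p_j$ coefficient alone, so that it merges cleanly with \eqref{16a} and perturbs neither the Tsarev conditions (2)--(3) nor the purely local conditions (5)--(6). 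A secondary point to settle is the consistency of the covering relations \eqref{eq:189}, which is guaranteed exactly once $f$ is a symmetry, i.e. once condition (1) holds.
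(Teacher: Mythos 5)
Your proposal is correct and follows essentially the same route as the paper: decompose $\tilde A^i=A^i(\mathbf{p})+f^ir$ by linearity of $\ell_F$, let the local part reproduce the conditions of Theorem \ref{thmloc} (hence (2), (3), (5), (6) and the local piece $-\nabla^iW^j_{,x}$ of the $p_j$ coefficient), and observe that the nonlocal tail contributes only to the coefficient of $r$ (giving condition (1)) and to the pure $p_j$ coefficient (giving $f^kV^j_kf^i-f^kV^i_kf^j$, which merges with the local piece into condition (4)). The signs and index bookkeeping in your final $p_j$ coefficient match the paper's, so nothing further is needed.
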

\begin{proof}
We notice that the Theorem is proved for $f^i=0$  by the previous considerations and due to Lemma \ref{thm3} . 
Moreover, it is easy to observe that by linearity of $\ell_F$ we have
\begin{equation*}
\ell_F(\tilde{A}^i)=\ell_F(A^i)+\ell_F(f^ir),
\end{equation*}
where \begin{equation}
A^i=g^{ij}p_{j,x}+\Gamma^{ij}_ku^k_xp_j,
\end{equation}
and \begin{equation}
\ell_F(f^ir)=\ell_F(f)r+f^ir_t+V^i_jf^jr_x.
\end{equation}
By using \eqref{eq:189} and the previous linearization, it is possible to observe that the only different term to be annihilate is the coefficient of $p_j$. In particular, we have the condition
\begin{equation}
-\nabla^iW^j_x+f^kV^j_kf^i-f^kV^i_kf^j=0,
\end{equation}
and so $(i)$ is verified if and only if $(ii)$ is. 
\end{proof}
\begin{remark}We remark that both the nonlocal cases studied in this section strictly link the operators with the system. Indeed, the compatibility conditions in Theorems \ref{sec:non-local-operators-1} and \ref{thm1nonloc}  remark the necessity for the nonlocal extension to be a symmetry for the system.  \end{remark}

\begin{corollary}If $\alpha\in\mathbb{R}$ and \begin{equation}
B^i(\mathbf{p,r})=g^{ij}p_{j,x}+\Gamma^{ij}_ku^k_xp_j+\alpha f^ir
\end{equation} then condition (4) of Theorem \ref{thm1nonloc} is substitued by the following
\begin{equation}
\nabla^iW^j_x+\alpha f^kV^j_kf^i-\alpha f^kV^i_kf^j=0.
\end{equation} \end{corollary}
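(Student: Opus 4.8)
The plan is to reduce everything to the proof of Theorem \ref{thm1nonloc} by tracking the single scalar factor $\alpha$, exploiting the linearity of $\ell_F$. Writing the local part as $A^i=g^{ij}p_{j,x}+\Gamma^{ij}_ku^k_xp_j$, I would split
\[
\ell_F\bigl(B^i(\mathbf{p},\mathbf{r})\bigr)=\ell_F(A^i)+\alpha\,\ell_F(f^i r).
\]
The term $\ell_F(A^i)$ is identical to the one analysed in Theorems \ref{thmloc} and \ref{thm1nonloc} and contains no $\alpha$; hence its coefficients of $p_{j,xx}$, $p_{j,x}$ and $p_j$ reproduce verbatim conditions (2), (3), (5), (6), together with the term $\nabla^iW^j_x$ arising from the $u_x$-free part of the coefficient of $p_j$.

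The remaining work is to isolate the contribution of the rescaled nonlocal tail. Exactly as in the proof of Theorem \ref{thm1nonloc}, expanding the definition of the linearization gives
\[
\ell_F(f^i r)=\ell_F(f)^i\,r+f^i r_t-V^i_jf^j r_x,
\]
into which I substitute the covering relations $r_x=f^jp_j$ and $r_t=V^k_jf^jp_k$. The coefficient of the nonlocal variable $r$ is then $\alpha\,\ell_F(f)^i$; since $\alpha\neq0$ (the case $\alpha=0$ being the purely local Theorem \ref{thmloc}) its vanishing is equivalent to $\ell_F(f)=0$, that is, condition (1) that $f$ be a symmetry of \eqref{sys11}, which is manifestly independent of $\alpha$. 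After the substitution the only other surviving part of $\alpha\,\ell_F(f^i r)$ is a contribution to the coefficient of $p_j$, equal to $\alpha\bigl(f^kV^j_kf^i-f^kV^i_kf^j\bigr)$; crucially it produces nothing at the orders $p_{j,x}$ or $p_{j,xx}$, so conditions (2) and (3) are untouched.

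Adding the two pieces, the full $u_x$-free part of the coefficient of $p_j$ becomes $\nabla^iW^j_x+\alpha\bigl(f^kV^j_kf^i-f^kV^i_kf^j\bigr)$, and setting it to zero yields precisely the asserted replacement of condition (4),
\[
\nabla^iW^j_x+\alpha f^kV^j_kf^i-\alpha f^kV^i_kf^j=0,
\]
while conditions (1), (2), (3), (5), (6) are recovered unchanged from the other coefficients. The step I would verify most carefully — and the only place the argument could go astray — is the claim that the rescaled nonlocal term feeds only into the $r$-coefficient and into the $u_x$-free part of the $p_j$-coefficient, and into no higher-jet coefficient; this is what guarantees that (4) is the unique condition absorbing $\alpha$ and that the equivalence (i) $\Leftrightarrow$ (ii) otherwise coincides with Theorem \ref{thm1nonloc}. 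Granting this bookkeeping, the corollary follows immediately.
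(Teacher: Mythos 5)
Your proposal is correct and takes essentially the same route the paper intends: the corollary is an immediate byproduct of the proof of Theorem \ref{thm1nonloc}, obtained by the splitting $\ell_F(B^i)=\ell_F(A^i)+\alpha\,\ell_F(f^ir)$ and the observation that the rescaled tail contributes only $\alpha\,\ell_F(f)^i$ to the coefficient of $r$ and $\alpha\bigl(f^kV^j_kf^i-f^kV^i_kf^j\bigr)$ to the $u_x$-free part of the coefficient of $p_j$, leaving all other coefficients untouched. The only caveat is the relative sign between $\nabla^iW^j_x$ and the $f$-terms: your final display matches the corollary as printed but not condition (4) of Theorem \ref{thm1nonloc} at $\alpha=1$, an inconsistency already present in the paper rather than a flaw in your argument.
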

An example of non-homogeneous hydrodynamic type systems admitting Ferapontov operators is given by the constant astigmatism equation
\begin{example}
Consider the system \eqref{ast} associated to the constant astigmatism equation 
and define the operator:
\begin{equation}\label{co}
P^{ij}=g^{ij}\partial_x+\Gamma^{ij}_ku^k_x+2f^i\partial_x^{-1}f^j
\end{equation}
where \begin{equation}
g^{ij}=\begin{pmatrix}
2u&0\\0&\frac{2}{u}
\end{pmatrix} \qquad , \qquad f=\begin{pmatrix}
0\\1
\end{pmatrix}
\end{equation} and $\Gamma^{ij}_k$ are Christoffel symbols of third type of the metric $g^{ij}$. System \eqref{ast} is Hamiltonian with respect to the operator $P^{ij}$. 

\noindent Here, $f$ is a symmetry of system \eqref{ast} and it is easy to verify that for $\alpha=2$  the conditions in Theorem \ref{thm1nonloc} are satisfied.
\end{example}
The existence of a bi-Hamiltonian structure for the constant astigmatism equation was investigated by Pavlov and Zykov in \cite{PavZyk}. Then, the constant astigmatism equation has two Hamiltonian structures. Moreover, the authors proved that the operators $Q^{ij}$ in Example \ref{examasti} and $P^{ij}$ in \eqref{co} are compatible, i.e. the system is integrable \cite{Magri:SMInHEq}. 

\vspace{3mm}

\noindent Finally, the following is an example in dimension 3. 

\begin{example} \label{exam1}
Let us consider the non-homogeneous quasilinear system
\begin{equation}
\begin{cases}
u_{t}=-\frac{3v^{2}}{2w^{2}}v_{x}+\frac{v^{3}}{w^{3}}w_{x}-x \\
v_{t}=u_{x}+\frac{3v}{w}v_{x}-\frac{3v^{2}}{2w^{2}}w_{x} \\
w_{t}=v_{x}
\end{cases}.
\label{eq:20}
\end{equation}
This system firstly appeared in \cite{PavVerVit1} and possesses a bi-Hamiltonian structure 
\begin{equation*}
{A}^{ij}\frac{\delta H_1}{\delta u^j}=B^{ij}\frac{\delta H_2}{\delta u^j},
\end{equation*}
where \begin{equation}A^{ij}=\begin{pmatrix}
0&0&1\\0&1&0\\1&0&0
\end{pmatrix}\partial_x\end{equation}
and \begin{equation*}B^{ij}=g^{ij}\partial_x+\Gamma^{ij}_ku^k_x+f^i\partial_x^{-1}f^j\end{equation*}
is a nonlocal operator of type \eqref{3} with
\begin{equation}
g^{ij}=%
\begin{pmatrix}
\frac{v^{3}}{w^{2}} & \frac{-3v^{2}}{2w} & -v+1 \\ 
\frac{-3v^{2}}{2w} & 2v+1 & w \\ 
-v+1 & w & 0%
\end{pmatrix}\quad \text{and}\quad f=\partial_u %
  \label{12}.
\end{equation}
 \noindent This is a non-trivial example of integrable non-homogeneous hydrodynamic type system, where $W^i=(-x,0,0)^{\text{T}}$. 
As expected, the system is compatible with the non-local operator $B^{ij}$ satisfying the conditions of Theorem \ref{thm1nonloc}. Moreover, the operator $A^{ij}$ is compatible with the system in sense of Theorem \ref{thmloc}.
\end{example}



\section{Conclusions}
The work here presented follows the approach established in \cite{VerVit1} of cotangent and tangent coverings for compatibility conditions, with a new application. The results cover a larger class of operators  then in \cite{VerVit1} and are confirmed by some known examples of PDEs. In particular, the study of nonlocal operators extended by isometries and quasilinear systems turns out to have a significant role in bi-Hamiltonian structures and integrability. However, a deeper study of systems which admit such Hamiltonian operators is still needed. 

\noindent The approach here presented emphasises the possibility to study a very large class of operators and systems, with different applications. In particular, as a future perspective, it would be interesting to find similar conditions for non-homogeneous operators composed by the sum of homogeneous ones. Examples of these kind appear very often in literature (e.g. the KdV equation in \cite{mokhov98:_sympl_poiss}, or the AKNS equation in \cite{falq}). The starting point will be the study of compatibility conditions for quasilinear systems of first order PDEs and non-homogeneous operators composed by a first-order operator plus an operator of order zero:
\begin{equation}
C^{ij}=g^{ij}\partial_x+\Gamma^{ij}_ku^k_x+\omega^{ij}
\end{equation}


\noindent Finally, a possible application of this method can be introduced for symplectic operators and the Monge-Amp\`ere equations. 

\vspace{5mm}

\textbf{Acknowledgements.} The author thanks R. Vitolo, E. Ferapontov, M. Pavlov and M. Menale for stimulating discussions. The author also acknowledges the financial support of GNFM of the Istituto Nazionale di Alta Matematica and of PRIN 2017 \textquotedblleft Multiscale phenomena in Continuum
Mechanics: singular limits, off-equilibrium and transitions\textquotedblright,
project number 2017YBKNCE.


\begin{thebibliography}{10}

\bibitem{AgaFer}
S. I . Agafonov and E. V. Ferapontov,
\newblock \emph{Systems of conservation laws in the context of the projective theory of congruences},
\newblock Izv. RAN. Ser. Mat., 1996, Volume 60, Issue 6, Pages 3–30.

\bibitem{Vit2}
M. Casati, P. Lorenzoni, D. Valeri, R. Vitolo,
\newblock  \emph{Weakly nonlocal 
Poisson brackets: tools, examples, computations}, 
\newblock Computer Physics Communications (2022), \url{arXiv:2101.06467}.



\bibitem{DubKri} B. A. Dubrovin, I. M. Krichever, and S. P. Novikov.
\newblock  Integrable systems. I. In \emph{Dynamical Systems IV}, volume 4 of Encyclopaedia of Mathematical Sciences, 
\newblock pages 173–280. Springer-Verlag, Berlin, 2 edition, 2001.

\bibitem{DN83}
B.A. Dubrovin and S.P. Novikov.
\newblock \emph{Hamiltonian formalism of one-dimensional systems of hydrodynamic type
  and the Bogolyubov--Whitham averaging method}.
\newblock {\em Soviet Math. Dokl.}, 27(3):665--669, 1983.

\bibitem{DubrovinNovikov:PBHT}
B.~A. Dubrovin and S.~P. Novikov.
\newblock \emph{Poisson brackets of hydrodynamic type}.
\newblock {\em Soviet Math. Dokl.}, 30:651--654, 1984.



\bibitem{falq}
G. Falqui, 
\newblock \emph{On a Camassa-Holm type equation with two dependent variables}, 
\newblock J. Phys. A.: Math. Gen. 39(2006),327-342.

\bibitem{F95:_nl_ho}
E.V. Ferapontov.
\newblock \emph{Nonlocal {H}amiltonian operators of hydrodynamic type: Differential
  geometry and applications}.
\newblock {\em Amer. Math. Soc. Transl.}, 170(2):33--58, 1995.

\bibitem{ferapontov92:_non_hamil} E.V. Ferapontov. \newblock \emph{Non local
matrix hamiltonian operators, differential geometry, and applications}. %
\newblock {\em Theoret. and Math. Phys.}, 91(3):642--649, 1992.



\bibitem{ferapontov14:_hamil}
E.V. Ferapontov, P.~Lorenzoni, and A.~Savoldi.
\newblock \emph{Hamiltonian operators of {D}ubrovin--{N}ovikov type in $2d$}.
\newblock {\em Lett. Math. Phys.}, 105(3):341--377, 2014.
\newblock arXiv:1312.0475.

\bibitem{MokFer12}
E. V. Ferapontov and O. I. Mokhov,
\newblock \emph{Non-local Hamiltonian operators of hydrodynamic type related to metrics of constant curvature}. Russian Math. Surveys, 1990, V. 45, n.3, p. 218-219.

\bibitem{FPV14}
E.V. Ferapontov, M.V. Pavlov, and R.F. Vitolo.
\newblock \emph{Projective-geometric aspects of homogeneous third-order {H}amiltonian
  operators}.
\newblock {\em J. Geom. Phys.}, 85:16--28, 2014.
\newblock \texttt{DOI:10.1016/j.geomphys.2014.05.027}.

\bibitem{FerPavVit1} E. V. Ferapontov, M. V. Pavlov and R. Vitolo, 
\newblock \emph{Systems of conservation laws with third-order Hamiltonian structures}, 
\newblock Lett. Math. Phys. 108, Issue 6 (2018), 1525-1550.





\bibitem{KerstenKrasilshchikVerbovetsky:HOpC}
P.~Kersten, I.~Krasil'shchik, and A.~Verbovetsky.
\newblock \emph{Hamiltonian operators and $\ell^*$-coverings}.
\newblock {\em J. Geom. Phys.}, 50:273--302, 2004.

\bibitem{KerstenKrasilshchikVerbovetsky:GSDBTEq}
P.~Kersten, I.~Krasil'shchik, and A.~Verbovetsky.
\newblock \emph{A geometric study of the dispersionless {B}oussinesq type equation}.
\newblock {\em Acta Appl. Math.}, 90:143--178, 2006.


\bibitem{krasilshchik84:_nonloc}
I.S. Krasil'shchik and A.M. Vinogradov.
\newblock \emph{Nonlocal symmetries and the theory of coverings: {A}n addendum to
  {A}.{M}. {V}inogradov's `local symmetries and conservation laws'}.
\newblock {\em Acta Appl. Math.}, 2:79--96, 1984.

\bibitem{KVV17}
J.~Krasil'shchik, A.~Verbovetsky, and R.~Vitolo.
\newblock {\em The symbolic computation of integrability structures for partial
  differential equations}.
\newblock Texts and Monographs in Symbolic Computation. Springer, 2018.
\newblock ISBN 978-3-319-71654-1; see \url{http://gdeq.org/Symbolic_Book} for
  downloading program files that are discussed in the book.


\bibitem{LSV:bi_hamil_kdv}
P.~Lorenzoni, A.~Savoldi, and R.~Vitolo.
\newblock \emph{{B}i-{H}amiltonian systems of {KdV} type}.
\newblock {\em J. Phys. A}, 51(4):045202, 2018.

\bibitem{Magri:SMInHEq}
F.~Magri.
\newblock \emph{A simple model of the integrable {H}amiltonian equation}.
\newblock {\em J. Math. Phys.}, 19:1156--1162, 1978.

\bibitem{FerMok1}
O. I. Mokhov, E. V. Ferapontov, 
\newblock \emph{Hamiltonian Pairs Associated with Skew-Symmetric Killing Tensors on Spaces of Constant Curvature}.
\newblock Funktsional. Anal. i Prilozhen., 28:2 (1994), 60–63; Funct. 
Anal. Appl., 28:2 (1994), 123–125.

\bibitem{mokhov98:_sympl_poiss}
O.I. Mokhov.
\newblock \emph{Symplectic and {P}oisson geometry on loop spaces of smooth manifolds
  and integrable equations}.
\newblock In S.P. Novikov and I.M. Krichever, editors, {\em Reviews in
  mathematics and mathematical physics}, volume~11, pages 1--128. Harwood
  academic publishers, 1998.



\bibitem{NovikovManakovPitaevskiiZakharov:TS}
S.~P. Novikov, S.~V. Manakov, L.~P. Pitaevskii, and V.~E. Zakharov.
\newblock {\em Theory of Solitons}.
\newblock Plenum Press, 1984.


\bibitem{Olver:ApLGDEq}
P.J. Olver.
\newblock {\em Applications of {L}ie Groups to Differential Equations}.
\newblock Springer-Verlag, 2nd edition, 1993.

\bibitem{PavVerVit1}
M. V. Pavlov, P. Vergallo, R. Vitolo.
\newblock \emph{Classification of bi-Hamiltonian pairs extended by isometries},
Proc. Roy. Soc. A, June 2021.

\bibitem{PavVit1}
M.V. Pavlov, R.F. Vitolo.
\newblock \emph{Bi-Hamiltonian structure of the Oriented Associativity equation} , 
\newblock J. Phys. A: Theor. Math. - Letters, Volume 52, Number 20 (2019).

\bibitem{PavZyk} M.V. Pavlov and S.A. Zykov. \newblock \emph{{L}%
agrangian and {H}amiltonian structures for the constant astigmatism
equation}. \newblock {\em J. Phys. A}, 46:395203, 2013. \newblock Ar{X}iv: 
\texttt{1212.6239}.

\bibitem{tsarev85:_poiss_hamil}
S.P.~Tsarev.
\newblock \emph{On {P}oisson brackets and one-dimensional {H}amiltonian systems of
  hydrodynamic type}.
\newblock {\em Soviet Math. Dokl.}, 31(3):488--491, 1985.

\bibitem{tsarev91:_hamil}
S.P. Tsarev.
\newblock \emph{The geometry of {H}amiltonian systems of hydrodynamic type. the
  generalized hodograph method}.
\newblock {\em Math. USSR-Izv.}, 37(2):397--419, 1991.

\bibitem{VerVit1}
P. Vergallo and R. Vitolo,
\newblock \emph{Homogeneous Hamiltonian operators and the theory of coverings}
\newblock {\em Diff. Geom. Its Appl.}, April 2021.
 

\bibitem{VerVit2}
P. Vergallo and R. Vitolo,
\newblock \emph{Projective geometry of homogeneous second order Hamiltonian operators},
\newblock {\em arXiv}, \url{https://arxiv.org/abs/2203.04237}, March 2022.


\bibitem{krasilshchikvinogradov84}
A.M. Vinogradov and I.S. Krasil'shchik.
\newblock \emph{On the theory of nonlocal symmetries of nonlinear partial
  differential equations}.
\newblock {\em Soviet Math. Dokl.}, 29:337--341, 1984.

\bibitem{Vit1}
R. Vitolo.
\newblock \emph{Computing with Hamiltonian operators}, 
\newblock Computer Physics 
Communications Volume 244 (2019), 228-245.





\end{thebibliography}
\end{document}